\theoremstyle{plain}
\newtheorem{theorem}{Theorem}
\newtheorem{lemma}[theorem]{Lemma}
\theoremstyle{definition}
\newtheorem{definition}[theorem]{Definition}
\newtheorem{remark}[theorem]{Remark}
\newtheorem{example}[theorem]{Example}
\DeclareMathOperator{\val}{val}
\DeclareMathOperator{\rep}{rep}
\DeclareMathOperator{\Fact}{Fact}
\DeclareMathOperator{\Shape}{Shape}
\newcommand{\kom}[1]{}
\title[Multidimensional Generalized Automatic Sequences\ldots]{Multidimensional Generalized Automatic Sequences and Shape-Symmetric Morphic Words}
\author[E. Charlier]{Emilie Charlier}
\author[T. K\"arki]{Tomi K\"arki${}^\star$}
\thanks{${}^\star$ Supported by Osk. Huttunen Foundation}
\author[M. Rigo]{Michel Rigo}
\address{Institute of Mathematics, University of Li\`ege,
Grande Traverse 12 (B 37), B-4000 Li\`ege, Belgium}
\email{\{echarlier,T.Karki,M.Rigo\}@ulg.ac.be}
\begin{document}

\begin{abstract}
    An infinite word is $S$-automatic if, for all $n\ge 0$, its $(n+1)$st
    letter is the output of a deterministic automaton fed with the
    representation of $n$ in the considered numeration system $S$. In
    this paper, we consider an analogous definition in a
    multidimensional setting and study the relationship with the
    shape-symmetric infinite words as introduced by Arnaud Maes.
    Precisely, for $d\ge 2$, we show that a multidimensional infinite
    word $x:\mathbb{N}^d\to\Sigma$ over a finite alphabet $\Sigma$ is
    $S$-automatic for some abstract numeration system $S$ built on a
    regular language containing the empty word if and only if $x$ is
    the image by a coding of a shape-symmetric infinite word.
\end{abstract}

\maketitle

\section{Introduction}

Let $k\ge2$. An infinite word $x=(x_n)_{n\ge 0}$ is {\em
  $k$-automatic} if for all $n\ge 0$, $x_n$ is obtained by feeding a
deterministic finite automaton with output (DFAO for short) with the
$k$-ary representation of $n$.  In his seminal paper \cite{Cob},
A.~Cobham shows that an infinite word is {\em $k$-automatic} if and
only if it is the image by a coding of a fixed point of a uniform
morphism of constant length $k$.

If we relax the assumption on the uniformity of the morphism, Cobham's
result still holds but $k$-ary systems are replaced by a wider class
of numeration systems, the so-called {\em abstract numeration systems}
\cite{LR,RM,R}. If an abstract numeration system is denoted by $S$,
the corresponding sequences that can be generated are said to be {\em
  $S$-automatic}. That is, the $(n+1)$st element of such a sequence is
obtained by feeding a DFAO with the representation of $n$ in the
considered abstract numeration system $S$.

This paper studies the relationship between sequences generated by
automata and sequences generated by morphisms, but extended to the
framework of multidimensional infinite words, i.e., maps from
$\mathbb{N}^d$ to some finite alphabet $\Sigma$. For instance,
$k$-automatic sequences have been generalized either by considering
$d$-tuples of $k$-ary representations given to a convenient DFAO or by
iterating morphisms for which images of letters are $d$-dimensional
cubes of constant size, see \cite{Salon} and also \cite{Per} for
questions related to frequencies of letters. In \cite{RM},
multidimensional $S$-automatic sequences have been introduced
mimicking O.~Salon's construction. Let us mention \cite{ABS} where a
different notion of bidimensional morphisms is introduced in
connection to problems arising in discrete geometry. In \cite{DFNR}
bidimensional $S$-automatic sequences turn out to be useful in the
context of combinatorial game theory. They play a central role to get
new caracterizations of $P$-positions for the famous Wythoff's game
and some of its variations. 
Another motivation for studying the set of
multidimensional $S$-automatic words $w$ over $\{0,1\}$ is to consider
them as characteristic words of subsets $P_w$ of $\mathbb{N}^d$, 
to extend the structure $\langle\mathbb{N};<\rangle$ 
by the corresponding predicates $P_w$ and to study the decidability 
of the corresponding first-order theory. See also \cite{CT} for 
relationship with second-order monadic theory.

Our main result in this paper can be precisely stated as follows.
\medskip

\noindent {\bf Theorem.}  Let $d\ge 1$. The $d$-dimensional infinite
word $x$ is $S$-automatic
for some abstract numeration system $S=(L,\Sigma,<)$ where
$\varepsilon\in L$ if and only if $x$ is the image by a coding
of a shape-symmetric $d$-dimensional infinite word.
\medskip

Our first task is to present the different concepts occurring in
this statement.  The notion of
shape-symmetry was first introduce by A.~Maes and was used mainly in
connection to logical questions about the decidability of first-order
theories where $\langle\mathbb{N};<\rangle$ is extended by some
morphic predicate \cite{Maes1,Maes2}.

\subsection{Abstract numeration systems} If $\Sigma$ is a finite
alphabet, $\Sigma^*$ denotes the free monoid generated by $\Sigma$
having concatenation of words as product and the empty word
$\varepsilon$ as neutral element. If $w=w_0\cdots w_{\ell-1}$ is a word, $\ell\ge0$, where $w_j$'s are letters, then $|w|$ denotes its
length $\ell$. Let $(\Sigma,<)$ be a totally ordered alphabet and $u,v$ be
two words over $\Sigma$. We say that $u$ is {\em genealogically less}
than $v$, and we write $u\prec v$ if either $|u|<|v|$ (i.e., $u$ is
of shorter length than $v$) or $|u|=|v|$ and there exist
$p,s,t\in\Sigma^*$, $a,b\in\Sigma$ such that $u=pas$, $v=pbt$ and
$a<b$ (i.e., $u$ is lexicographically less than $v$). Let us also
mention that we have taken the convention that all finite or infinite
words and pictures have indices starting from~$0$.

\begin{definition}
    An {\em abstract numeration system} \cite{LR} is a triple
    $S=(L,\Sigma,<)$ where $L$ is an infinite regular language over a
    totally ordered finite alphabet $(\Sigma,<)$. Enumerating the
    words of $L$ using the genealogical ordering $\prec$ induced by
    the ordering $<$ of $\Sigma$ gives a one-to-one correspondence
    $\rep_S:\mathbb{N}\to L$ mapping the non-negative integer $n$ onto
    the $(n+1)$st word in $L$.  In particular, $0$ is sent onto the
    first word in the genealogically ordered language $L$. The
    reciprocal map is denoted by $\val_S:L\to\mathbb{N}$.
\end{definition}

\begin{example}\label{exa:1}
    Take $\Sigma=\{a,b\}$ with $a<b$ and
    $L=\{a,ba\}^*\{\varepsilon,b\}$. The first words in $L$ are
    $\varepsilon$, $a$, $b$, $aa$, $ab$, $ba$, $aaa$, $aab$,\ldots.
    With $S=(L,\Sigma,<)$, we have for instance $\val_S(b)=2$ and
    $\rep_S(5)=ba$.
\end{example}

\begin{remark}
    Any positional numeration system built on a strictly increasing
    sequence $(U_n)_{n\ge 0}$ of integers such that $U_0=1$ gives an
    abstract numeration system whenever $\mathbb{N}$ is
    $U$-recognizable, i.e., whenever the set of greedy representations
    of the non-negative integers in terms of the sequence $(U_n)_{n\ge
      0}$ is regular.
\end{remark}

Any regular language is accepted by a
deterministic finite automaton, which is defined as follows. A
{\em deterministic finite automaton~$\mathcal{A}$} (DFA for short) is given by $\mathcal{A}=(Q,q_0,\Sigma,\delta,F)$ 
where $Q$ is the finite {\em set of states}, $q_0\in Q$ is the {\em initial state}, $\delta:Q\times\Sigma\to Q$ is the {\em transition function} 
and $F\subseteq Q$ is the {\em set of final
states}. The function $\delta$ can be extended to $Q\times\Sigma^*$
by $\delta(q,\varepsilon)=q$ for all $q\in Q$ and
$\delta(q,aw)=\delta(\delta(q,a),w)$ for all $q\in Q$, $a\in\Sigma$
and $w\in\Sigma^*$. A word $w\in\Sigma^*$ is {\em accepted by~$\mathcal{A}$} if $\delta(q_0,w)\in F$. 
The {\em langage accepted by~$\mathcal{A}$} is the set of the accepted words. 
A {\em deterministic finite automaton with output}
(DFAO for short) $\mathcal{B}=(Q,q_0,\Sigma,\delta,\Gamma,\tau)$  is
defined analogously where $\Gamma$ is the {\em output alphabet} and
$\tau:Q\to\Gamma$ is the {\em output function}. The {\em output corresponding
to the input $w\in\Sigma^*$} is $\tau(\delta(q_0,w))$.


\subsection{$S$-automatic multidimensional infinite words}

Let $d\ge 1$. To work with $d$-tuples of words of the same length,
we introduce the following map.

\begin{definition}\label{diese}
If $w_1,\ldots,w_d$ are finite words over the alphabet $\Sigma$, the
map $(\cdot)^{\#}:(\Sigma^*)^d\to{((\Sigma\cup\{\#\})^d)}^*$ is defined as
\begin{gather*}
(w_1,\ldots,w_d)^{\#}:=(\#^{m-|w_1|}w_1,\ldots,\#^{m-|w_d|}w_d)
\end{gather*}
where $m=\max\{|w_1|,\ldots,|w_d|\}$.
\end{definition}

As an example,
$(ab,bbaa)^{\#}=(\#\#ab,bbaa)$. In what follows,
we use the notation $\Sigma_{\#}$ as a shorthand for
$\Sigma\cup\{\#\}$.

\begin{definition}
    A {\em $d$-dimensional infinite word} over the
    alphabet $\Gamma$ is a map $x:\mathbb{N}^d\to\Gamma$. We use
    notation like $x_{n_1,\ldots,n_d}$ or $x(n_1,\ldots,n_d)$ to
    denote the value of $x$ at $(n_1,\ldots,n_d)$. Such a word is said
    to be {\em $S$-automatic} if there exist an abstract numeration
    system $S=(L,\Sigma,<)$ and a deterministic finite automaton with
    output $\mathcal{A}=(Q,q_0,(\Sigma_{\#})^d,\delta,\Gamma,\tau)$
    such that, for all $n_1,\ldots,n_d\ge 0$,
                $$\tau(\delta(q_0,(\rep_S(n_1),\ldots,\rep_S(n_d))^{\#}))=x_{n_1,\ldots,n_d}.$$
                This notion was introduced in \cite{RM} (see also \cite{NR}) as a
                natural generalization of the multidimensional $k$-automatic sequences
                introduced in \cite{Salon}.
\end{definition}

\begin{example} \label{exa:S-aut}
    Consider the abstract numeration system introduced in
    Example~\ref{exa:1}, $S=(\{a,ba\}^*\{\varepsilon,b\},\{a,b\},a<b)$
    and the DFAO depicted in Figure~\ref{fig:dfao1}.
    Since this automaton is fed with entries of the form
    $(\rep_S(n_1),\rep_S(n_2))^{\#}$,
    we do not consider the transitions of label $(\#,\#)$.
\begin{figure}[htbp]
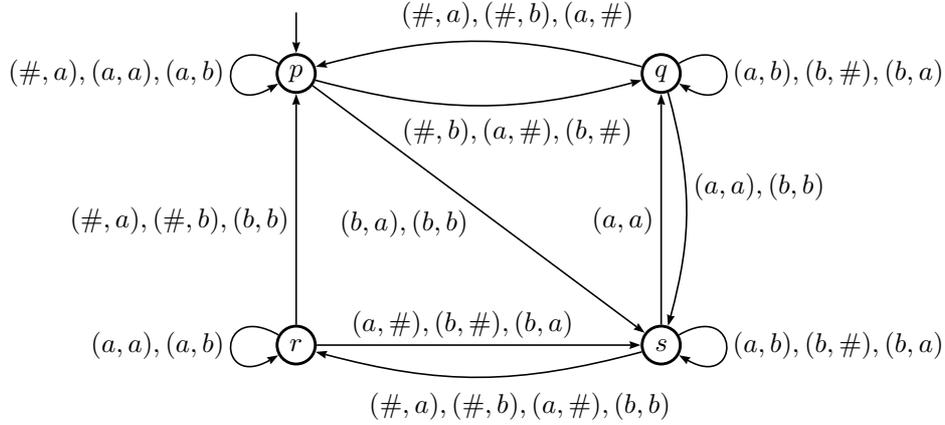

        \centering
\VCDraw{%
        \begin{VCPicture}{(0,-4.5)(8,4.5)}
 \State[p]{(0,3)}{P}
 \State[q]{(8,3)}{Q}
 \State[r]{(0,-3)}{R}
 \State[s]{(8,-3)}{S}
\Initial[n]{P}
\LoopW[.5]{P}{(\#,a),(a,a),(a,b)}
\ArcR[.6]{P}{Q}{(\#,b),(a,\#),(b,\#)}
\EdgeR{P}{S}{(b,a),(b,b)}
\EdgeL{R}{P}{(\#,a),(\#,b),(b,b)}
\EdgeL{R}{S}{(a,\#),(b,\#),(b,a)}
\LoopW[.5]{R}{(a,a),(a,b)}
\LoopE[.5]{S}{(a,b),(b,\#),(b,a)}
\ArcL{S}{R}{(\#,a),(\#,b),(a,\#),(b,b)}
\ArcL{Q}{S}{(a,a),(b,b)}
\LoopE[.5]{Q}{(a,b),(b,\#),(b,a)}
\ArcR{Q}{P}{(\#,a),(\#,b),(a,\#)}
\EdgeL{S}{Q}{(a,a)}
\end{VCPicture}
}
        \caption{A deterministic finite automaton with output.}
        \label{fig:dfao1}
    \end{figure}
    If the outputs of the DFAO are considered to be the states
    themselves, then we produce the bidimensional infinite $S$-automatic
    word given in Figure~\ref{tab:ex1}.
    \begin{figure}[htpb]
                    \begin{tabular}{r|ccccccccc}  & \rotatebox{90}{$\varepsilon$} & \rotatebox{90}{$a$} & \rotatebox{90}{$b$} & \rotatebox{90}{$aa$} & \rotatebox{90}{$ab$} & \rotatebox{90}{$ba$} & \rotatebox{90}{$aaa$}  & \rotatebox{90}{$aab$} & $\cdots$ \\
\hline
$\varepsilon$ & $p$ & $q$ & $q$ & $p$ & $q$ & $p$ & $q$ & $q$ & $\cdots$ \\
          $a$ & $p$ & $p$ & $s$ & $s$ & $q$ & $s$ & $p$ & $s$ &\\
          $b$ & $q$ & $p$ & $s$ & $q$ & $s$ & $q$ & $p$ & $s$ &\\
         $aa$ & $p$ & $p$ & $s$ & $p$ & $s$ & $q$ & $q$ & $s$ &\\
         $ab$ & $q$ & $p$ & $s$ & $p$ & $s$ & $s$ & $s$ & $r$ &\\
         $ba$ & $p$ & $s$ & $q$ & $p$ & $s$ & $q$ & $ s  $ & $q$ &\\
        $aaa$ & $p$ & $p$ & $s$ & $p$ & $s$ & $q$ & $ p  $ & $s$ &\\
        $aab$ & $q$ & $p$ & $s$ & $p$ & $s$ & $s$ & $ p  $ & $s$ &\\
$\vdots$ & $\vdots$ &  &  &  &  &  &   &  &$\ddots$\\
            \end{tabular}
\caption{A bidimensional infinite $S$-automatic word.}\label{tab:ex1}
    \end{figure}
\end{example}


\subsection{Multidimensional morphism} This section is given for the
sake of completeness and is mainly dedicated to present the notions of
multidimensional morphism and shape-symmetry as they were introduced
by A.~Maes mainly in connection with the
decidability question of logical theories \cite{Maes1,Maes2,Maes}.

If $i\le j$ are integers, $[\![i,j]\!]$ denotes the interval of
integers $\{i,i+1,\ldots,j\}$. Let $d\ge 1$.
If $\mathbf{n}\in\mathbb{N}^d$ and $i\in\{1,\ldots,d\}$, then
$n_i$ is the $i$th component of $\mathbf{n}$. Let
$\mathbf{m}$ and $\mathbf{n}$ be two
$d$-tuples in $\mathbb{N}^d$. We write $\mathbf{m}\le\mathbf{n}$ (resp. $\mathbf{m}<\mathbf{n}$), if
$m_i\le n_i$ (resp. $m_i<n_i$) for all $i=1,\ldots,d$. For
$\mathbf{n}\in\mathbb{N}^d$ and $j\in\mathbb{N}$,
$\mathbf{n}+j:=(n_1+j,\ldots,n_d+j)$. In particular, we set
$\mathbf{0}:=(0,\ldots,0)$ and $\mathbf{1}:=(1,\ldots,1)$.
If $j.\mathbf{1}\le\mathbf{n}$,
then we set $\mathbf{n}-j:=(n_1-j,\ldots,n_d-j)$.

\begin{definition}
    Let $s_1,\ldots,s_d$ be positive integers or $\infty$. A {\em
    $d$-dimensional picture} over the alphabet $\Sigma$ is a map $x$
    with domain $[\![0,s_1-1]\!]\times\cdots\times
    [\![0,s_d-1]\!]$ taking values in $\Sigma$. By convention, if
    $s_i=\infty$ for some $i$, then $[\![0,s_i-1]\!]=\mathbb{N}$.  If
    $x$ is such a picture, we write $|x|$ for the $d$-tuple
    $(s_1,\ldots,s_d)\in(\mathbb{N}\cup\{\infty\})^d$ which is called
    the {\em shape} of $x$. We denote by $\varepsilon_d$ the
    $d$-dimensional picture of shape $(0,\ldots,0).$
    Note that $\varepsilon_1=\varepsilon$.
    If $\mathbf{n}=(n_1,\ldots,n_d)$ belongs
    to the domain of $x$, we indifferently use the notation $x_{n_1,\ldots,n_d}$,
    $x_\mathbf{n}$, $x(n_1,\ldots,n_d)$ or $x(\mathbf{n})$.
    Let $x$ be a $d$-dimensional picture. If for all
    $i\in\{1,\ldots,d\}$, $|x|_i<\infty$, then $x$ is said to be {\em
    bounded}.  The set of $d$-dimensional bounded pictures over
    $\Sigma$ is denoted by $B_d(\Sigma)$. A bounded picture $x$ is a
    {\em square} of {\em size} $c\in\mathbb{N}$ if $|x|=c.\mathbf{1}$.
\end{definition}

\begin{definition}
    Let $x$ be a $d$-dimensional picture. If $\mathbf{0}\le
    \mathbf{s}\le\mathbf{t}\le |x|-1$, then $x[\mathbf{s},\mathbf{t}]$ is
    said to be a {\em factor} of $x$ and is defined as the picture $y$
    of shape $\mathbf{t}-\mathbf{s}+1$ given by
                $y(\mathbf{n})=x(\mathbf{n}+\mathbf{s})$ for all
    $\mathbf{n}\in\mathbb{N}^d$ such that $\mathbf{n}\le
    \mathbf{t}-\mathbf{s}$.
                For any $\mathbf{u}\in\mathbb{N}^d$,
                the set of factors of $x$ of shape $\mathbf{u}$ is denoted by
                $\Fact_{\mathbf{u}}(x)$.
\end{definition}

\begin{example}
    Consider the bidimensional (bounded) picture of shape $(5,2)$,
$$x=\begin{array}{|c|c|c|c|c|}
\hline
    a&b&a&a&b\\
\hline
    c&d&b&c&d\\
\hline
\end{array}\ .$$
We have
$$x[(0,0),(1,1)]=\begin{array}{|c|c|}
\hline
    a&b\\
\hline
    c&d\\
\hline
\end{array}\quad \text{ and }\quad
x[(2,0),(4,1)]=\begin{array}{|c|c|c|}
\hline
    a&a&b\\
\hline
    b&c&d\\
\hline
\end{array}\ .$$
For instance, $\Fact_{\mathbf{1}}(x)=\{a,b,c,d\}$ and
$$\Fact_{(3,2)}(x)=\left\{
\begin{array}{|c|c|c|}
\hline
    a&b&a\\
\hline
    c&d&b\\
\hline
\end{array}\, ,\
\begin{array}{|c|c|c|}
\hline
    b&a&a\\
\hline
    d&b&c\\
\hline
\end{array}\, ,\ \begin{array}{|c|c|c|}
\hline
    a&a&b\\
\hline
    b&c&d\\
\hline
\end{array}
\right\}.$$
\end{example}

\begin{definition}
                Let $x$ be a $d$-dimensional picture of shape
                $\mathbf{s}=(s_1,\ldots,s_d)$. For all $i\in\{1,\ldots,d\}$ and
                $k<s_i$, $x_{|i,k}$ is the $(d-1)$-dimensional picture of shape
                $$|x|_{\widehat{i}}=\mathbf{s}_{\widehat{i}}
                        :=(s_1,\ldots,s_{i-1},s_{i+1},\ldots,s_d)$$
                defined by
                $$x_{|i,k}(n_1,\ldots,n_{i-1},n_{i+1},\ldots,n_d)=
                                x(n_1,\ldots,n_{i-1},k,n_{j+1},\ldots,n_d)$$
                for all $0\le n_j < s_j$, $j\in\{1,\ldots,d\}\setminus\{i\}$.
\end{definition}

\begin{definition}
    Let $x,y$ be two $d$-dimensional pictures. If for some
    $i\in\{1,\ldots,d\}$,
    $|x|_{\widehat{i}}=|y|_{\widehat{i}}=(s_1,\ldots,s_{j-1},s_{j+1},\ldots,s_d)$,
    then we define the {\em concatenation} of $x$ and $y$ {\em in the
    direction $i$} as the $d$-dimensional picture
    $x\odot^i y$ of shape
    $(s_1,\ldots,s_{j-1},|x|_i+|y|_i,s_{j+1},\ldots,s_d)$ satisfying
    \begin{itemize}
      \item[(i)] $x=(x\odot^i y)[\mathbf{0},|x|-1]$
      \item[(ii)] $y=(x\odot^i y)
                                [(0,\ldots,0,|x|_i,0,\ldots,0),(0,\ldots,0,|x|_i,0,\ldots,0)+|y|-1]$.
    \end{itemize}
                The $d$-dimensional empty word $\varepsilon_d$ is a word of shape
                $\mathbf{0}$. We extend the definition to the concatenation of
                $\varepsilon_d$ and any $d$-dimensional word
                $x$ in the direction $i\in \{1,\ldots,d\}$ by
                \[
                \varepsilon_d \odot^i x = x \odot^i \varepsilon_d = x.
                \]
                Especially, $\varepsilon_d \odot^i \varepsilon_d = \varepsilon_d$.
\end{definition}

\begin{example}
    Consider the two bidimensional pictures
                $$x=
                \begin{array}{|c|c|}
                \hline
    a&b\\
                \hline
    c&d\\
                \hline
                \end{array}\quad \text{ and }\quad
                y=
                \begin{array}{|c|c|c|}
                \hline
    a&a&b\\
                \hline
    b&c&d\\
                \hline
                \end{array}$$
                of shape respectively $|x|=(2,2)$ and $|y|=(3,2)$. Since
                $|x|_{\widehat{1}}=|y|_{\widehat{1}}=2$, we get
                $$x\odot^1 y=\begin{array}{|c|c|c|c|c|}
                \hline
    a&b&a&a&b\\
                \hline
    c&d&b&c&d\\
                \hline
                \end{array}\ .$$
                But notice that $x\odot^2 y$ is not defined because
                $2=|x|_{\widehat{2}}\neq |y|_{\widehat{2}}=3$.
\end{example}

Let us now define how to erase hyperplanes from a multidimensional picture.

\begin{definition}\label{def:rho}
                Let $x$ be a $d$-dimensional picture of shape
                $(s_1,s_2,\ldots,s_d)$ over $\Sigma \cup \{e\}$,
                where $e$ does not belong to $\Sigma$.
                A $(d-1)$-dimensional picture $x_{|i,k}$ is called
                an \emph{$e$-hyperplane} of $x$ if each letter in $x_{|i,k}$
                is equal to~$e$. \emph{Erasing} an $e$-hyperplane $x_{|i,k}$
                of $x$ means replacing $x$ with a $d$-dimensional picture $x'= y \odot^i z$, where
                \[
                y= \left \{ \begin{array}{ll}
                x[\mathbf{0},(s_1,\ldots,s_{i-1},k,s_{i+1},\ldots,s_d)-1]
                & \text{if $k\ge1$,}\\
                \varepsilon_d & \text{otherwise,}
                \end{array}
                \right .
                \]
                and
                \[
                z= \left \{ \begin{array}{ll}
                x[(0,\ldots,0,k+1,0,\ldots,0),|x|-1)] & \text{if $k<s_i-1$},\\
                \varepsilon_d & \text{otherwise.}
                \end{array}
                \right .
                \]
                We denote by $\rho_e$ the map which associates to any $d$-dimensional
                picture $x$ over $\Sigma \cup \{e\}$, the picture
                $\rho_e(x)$ obtained by erasing iteratively every $e$-hyperplane of~$x$.
                Moreover, we say that $x$ is
                \emph{$e$-erasable} if the picture $\rho_e(x)$ does not contain
                the letter~$e$ as a factor anymore.
                In other words, for each position $\mathbf{n}$ such that
                $x_\mathbf{n}=e$, there exists an integer $i\in\{1,\ldots,d\}$
                such that $x_{|i,n_i}$ is an $e$-hyperplane.
\end{definition}

                Let $x$ be a $d$-dimensional picture and $\mu:\Sigma\to B_d(\Sigma)$
                be a map. Note that $\mu$ cannot necessarily be extended
                to a morphism on $\Sigma^*$. Indeed, if $x$ is a picture over $\Sigma$,
                $\mu(x)$ is not always well defined. Depending on the shapes
                of the images by $\mu$ of the letters in $\Sigma$, when trying to build
                $\mu(x)$ by concatenating the images $\mu(x_{\mathbf{i}})$
                we can obtain ``holes''
                or ``overlaps''. Therefore, we introduce some restrictions on $\mu$.

\begin{definition}
                Let $\mu:\Sigma\to B_d(\Sigma)$ be a map and $x$ be a $d$-dimensional
                picture such that
                $$\forall i\in\{1,\ldots,d\}, \forall k<|x|_i,
                \forall a,b\in \Fact_{\mathbf{1}}(x_{|i,k}) : |\mu(a)|_i=|\mu(b)|_i.$$
                Then $\mu(x)$ is defined as
                $$\mu(x)=\odot^1_{0\le n_1<|x|_1}\left(\cdots\left(\odot^d_{0\le n_d<|x|_d}                                     \mu(x(n_1,\ldots,n_d))\right)\right).$$
                Note that the ordering of the products in the different directions is unimportant.
\end{definition}

\begin{example}
    Consider the map $\mu$ given by
                $$a \mapsto
                \begin{array}{|c|c|}
                \hline
                a&a\\
                \hline
                b&d\\
                \hline
                \end{array}\, ,\ b\mapsto
                \begin{array}{|c|}
                \hline
                c\\
                \hline
                b\\
                \hline
                \end{array}\, ,\ c\mapsto
                \begin{array}{|c|c|}
                \hline
                a&a\\
                \hline
                \end{array}\, ,\ d\mapsto
                \begin{array}{|c|}
                \hline
                d\\
                \hline
                \end{array}\ .$$
                Let
                $$x=
                \begin{array}{|c|c|}
                \hline
    a&b\\
                \hline
    c&d\\
                \hline
                \end{array}\ .$$
                Since $|\mu(a)|_2=|\mu(b)|_2=2$,
                $|\mu(c)|_2=|\mu(d)|_2=1$, $|\mu(a)|_1=|\mu(c)|_1=2$ and
                $|\mu(b)|_1=|\mu(d)|_1=1$, $\mu(x)$ is well defined and given by
                $$\mu(x)=\begin{array}{|c|c|c|}
                \hline
                a&a&c\\
                \hline
                b&d&b\\
                \hline
                a&a&d\\
                \hline
                \end{array}\ .$$
                But one can notice that $\mu^2(x)$ is not well defined.
\end{example}

\begin{definition}
    Let $\mu:\Sigma\to B_d(\Sigma)$ be a map. If for all $a\in\Sigma$
    and all $n\ge 0$, $\mu^n(a)$ is well defined from $\mu^{n-1}(a)$,
    then $\mu$ is said to be a {\em $d$-dimensional morphism}.
\end{definition}

The usual notion of a {\em prolongable morphism} can be given in
this multidimensional setting.

\begin{definition}
                Let $\mu$ be a $d$-dimensional
    morphism and $a$ be a letter such that
    $(\mu(a))_{\mathbf{0}}=a$. We say that $\mu$ is {\em prolongable on $a$}.
    Then the limit
                $$w=\mu^\omega(a):=\lim_{n\to+\infty}\mu^n(a)$$
                is well defined and $w=\mu(w)$ is a {\em fixed point} of $\mu$.
                A $d$-dimensional infinite word $x$
                over $\Sigma$ is said to be {\em purely morphic} if it is a
                fixed point of a $d$-dimensional morphism.
                It is said to be {\em morphic} if there exists a coding $\nu:\Gamma\to\Sigma$                           (i.e., a letter-to-letter morphism) such that $x=\nu(y)$
                for some purely morphic word $y$ over~$\Gamma$.
\end{definition}

The so-called property of shape-symmetry that we introduce now is a
natural generalization of uniform morphisms where all images are
squares of the same dimension \cite{Salon}.

\begin{definition}
    Let $\mu:\Sigma\to B_d(\Sigma)$ be a $d$-dimensional morphism
    having the $d$-dimensional infinite word $x$ as a fixed point. If
    for any permutation $f$ of $\{1,\ldots,d\}$ and for all
    $n_1,\ldots,n_d>0$, $|\mu(x(n_{f(1)},\ldots,n_{f(d)}))|=
    (s_{f(1)},\ldots,s_{f(d)})$ whenever
    $|\mu(x(n_1,\ldots,n_d))|=(s_1,\ldots,s_d)$, then $x$ is said to
    be {\em shape-symmetric} (with respect to $\mu$).
\end{definition}

\begin{remark}
    An equivalent formulation of shape-symmetry is given as follows.
    Let $\mu:\Sigma\to B_d(\Sigma)$ be a $d$-dimensional morphism
    having the $d$-dimensional infinite word $x$ as a fixed point. This
    word is shape-symmetric if and only if
$$\forall i,j\le d,\forall k\in\mathbb{N},
\forall a\in \Fact_{\mathbf{1}}(x_{|i,k}), \forall b\in
    \Fact_{\mathbf{1}}(x_{|j,k}): |\mu(a)|_i=|\mu(b)|_j.$$
\end{remark}

\begin{remark}
    A. Maes showed that determining whether or not a map $\mu:\Sigma\to
    B_d(\Sigma)$ is a $d$-dimensional morphism is a decidable
    problem. Moreover he showed that if $\mu$ is prolongable on a letter $a$,
    then it is decidable whether or not the fixed
    point $\mu^\omega(a)$ is shape-symmetric \cite{Maes1,Maes2,Maes}.
\end{remark}

\begin{example}\label{exa:run}
    One can show that the following morphism has a fixed point
    $\mu^\omega(a)$ which is shape-symmetric.
$$\mu(a)=\mu(f)=
\begin{array}{|c|c|}
    \hline
a&b\\
\hline
c&d\\
\hline
\end{array}\, ,\ \mu(b)=\begin{array}{|c|}
    \hline
e\\
\hline
c\\
\hline
\end{array}\, ,\ \mu(c)=\begin{array}{|c|c|}
    \hline
e&b\\
\hline
\end{array}\, ,\mu(d)=\begin{array}{|c|}
    \hline
f\\
\hline
\end{array}\, ,\ \mu(e)=
\begin{array}{|c|c|}
    \hline
e&b\\
\hline
g&d\\
\hline
\end{array}\, ,$$
$$\mu(g)=
 \begin{array}{|c|c|}
    \hline
h&b\\
\hline
\end{array}\, ,\
\mu(h)=
\begin{array}{|c|c|}
    \hline
h&b\\
\hline
c&d\\
\hline
\end{array}\ .$$
We have represented in Figure~\ref{fig:mu} the beginning of the picture. Some elements are underlined for the use of Example~\ref{exa:later}.
\begin{figure}
    $$\begin{array}{|cc|c|cc|cc|c|cc|c}
\hline
\underline{a}&\underline{b}&e&e&b&e&b&e&e&b&\cdots\\
c&d&\underline{c}&g&d&g&d&c&g&d& \\
\hline
e&b&f&e&\underline{b}&h&b&f&h&b& \\
\hline
e&b&e&a&b&e&b&e&h&b& \\
g&d&c&c&d&g&d&\underline{c}&c&d& \\
\hline
e&b&e&e&b&a&b&e&e&b& \\
g&d&c&g&d&c&d&c&g&d& \\
\hline
h&b&f&e&b&e&b&f&h&b& \\
\hline
e&b&e&e&b&e&b&e&a&b& \\
g&d&c&g&d&g&d&c&c&d& \\
\hline
\vdots&&&&&&&&&&\ddots\\
\end{array}$$
    \caption{A fixed point of $\mu$.}
    \label{fig:mu}
\end{figure}
\end{example}

\begin{definition}
    Let $d\ge 2$ and let $\mu:\Sigma\to B_d(\Sigma)$ be a $d$-dimensional morphism
    having the $d$-dimensional infinite word $x$ as a fixed point. The
    {\em shape sequence} of $x$ with respect to $\mu$ {\em in the direction
    $i\in\{1,\ldots,d\}$} is the sequence
                $$\Shape_{\mu,i}(x)=(|\mu(x_{|i,k})|_i)_{k\ge 0}.$$
    For a unidimensional morphism $\mu$ having the infinite
    word $x=x_0x_1x_2\cdots$ as a fixed point, the {\em shape sequence} of $x$ with
    respect to $\mu$ is $\Shape_\mu(x)=(|\mu(x_k)|)_{k\ge 0}$.
\end{definition}

\begin{remark}\label{rem:shape}
    Let $\mu:\Sigma\to B_d(\Sigma)$ be a $d$-dimensional morphism
    having the $d$-dimensional infinite word $x$ as a fixed point.
    Note that $x$ is shape-symmetric if and only if
$$\Shape_{\mu,1}(x)=\cdots=\Shape_{\mu,d}(x).$$
\end{remark}

\section{Main result}

Let us recall that our goal is to prove the following result.

\begin{theorem}
    Let $d\ge 1$. The $d$-dimensional infinite word
    $x$ is $S$-automatic
    for some abstract numeration system $S=(L,\Sigma,<)$ where
    $\varepsilon\in L$ if and only if $x$ is the image by a coding
    of a shape-symmetric infinite $d$-dimensional word.
\end{theorem}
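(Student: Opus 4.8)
The plan is to prove both implications by reducing to the one-dimensional correspondence between abstract numeration systems with $\varepsilon\in L$ and morphic words (\cite{LR,R}), and then to read shape-symmetry as exactly the condition that lets a single language $L$ govern all $d$ coordinates at once. I would first fix the one-dimensional dictionary. Given $S=(L,\Sigma,<)$ with $\varepsilon\in L$, let $\mathcal{A}_L=(Q,q_0,\Sigma,\delta,F)$ be the trim minimal automaton of $L$ and define a morphism $\mu_L$ on $Q$ by setting $\mu_L(q)$ to be the concatenation, in increasing order of $a\in\Sigma$, of those successors $\delta(q,a)$ that can still reach $F$. The $n$-th letter of its fixed point is then $\delta(q_0,\rep_S(n))$, and composing with the DFAO output recovers the one-dimensional $S$-automatic word; the hypothesis $\varepsilon\in L$ is what pins down the seed (position $0$, represented by $\varepsilon$, sits at $q_0$) and makes the morphism prolongable after the standard adjustment. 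Conversely a prolongable morphism yields such an $L$. The invariant I would carry through is the shape sequence of $\mu_L$, whose $k$-th term is the number of co-accessible successors of the state occurring at position $k$; this branching profile is what must agree across directions.

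For the implication ``shape-symmetric $\Rightarrow$ $S$-automatic'', write $x=\nu(y)$ with $y=\mu^\omega(a)$ shape-symmetric. By Remark~\ref{rem:shape} the sequences $\Shape_{\mu,1}(y),\ldots,\Shape_{\mu,d}(y)$ coincide in a common value $u=(u_k)_{k\ge0}$. A short induction on the equivalent formulation of shape-symmetry shows $|\mu^N(a)|_1=\cdots=|\mu^N(a)|_d$ for every $N$, so the iterates are genuine $d$-cubes, of common sides $1,2,3,\ldots$ in Example~\ref{exa:run}, with side $V_N$ satisfying $V_{N+1}=\sum_{k<V_N}u_k$. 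Running the one-dimensional dictionary backwards, I would produce an abstract numeration system $S=(L,\Sigma,<)$ with $\varepsilon\in L$ whose branching profile is $u$. Because each $\rep_S(n_i)$ is then drawn from the same $L$, padding the $d$-tuple with leading $\#$'s and scanning it column by column simulates the synchronous refinement of $y$ by $\mu$: the DFAO keeps as state the letter of $y$ currently being refined, shape-symmetry guarantees the descent is simultaneously well defined in all directions, and the output $\tau=\nu$ delivers $x$.

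For the converse I would run the same dictionary forward. From $S$ and the DFAO I extract the single branching profile attached to $L$, identical in every coordinate precisely because the same $L$ is read in each. I would then build a $d$-dimensional morphism $\mu$ on an alphabet of states, a product of the DFAO states with those of $\mathcal{A}_L$, letting $\mu$ of a state be the $d$-dimensional box whose side in direction $i$ records its co-accessible $\Sigma$-successors read in the $i$-th coordinate. Since this side rule is the same function of the state in every direction, $\Shape_{\mu,1}=\cdots=\Shape_{\mu,d}$, so by Remark~\ref{rem:shape} the fixed point is shape-symmetric, and a coding returns $x$.

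The main obstacle, in both directions, is that the words of $L$ have unequal lengths, so $(\cdot)^{\#}$ pads with leading $\#$'s and the boxes assembled by $\mu$ are ragged rather than rectangular: the naive concatenation produces exactly the holes and overlaps warned about just before the definition of $\mu(x)$, and the slices created by $\#$ must be discarded. This is where the erasing machinery of Definition~\ref{def:rho} enters. I would carry the intermediate pictures over $\Sigma\cup\{e\}$, use $e$ to mark the $\#$-induced slices, prove $e$-erasability, and check that $\rho_e$ commutes with iteration of $\mu$ so that the trimmed fixed point is the genuine object and is still shape-symmetric. Verifying this commutation, together with prolongability and the correct handling of the seed at the origin (ensured by $\varepsilon\in L$), is the delicate part; everything else is the bookkeeping of matching side lengths, which shape-symmetry renders automatic.
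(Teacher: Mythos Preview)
Your forward direction is essentially the paper's: extract the common shape sequence, build the directive language $L_{\mu_1,a}$ from the restriction of $\mu$ to the first row, and let the DFAO carry as state the current letter of $y$, with output~$\nu$. One small correction: no erasing is needed in this direction. The paper pads with the smallest digit $0$ rather than $\#$; since the directive language contains no word with leading~$0$ and the initial state of $\mathcal{A}_{\mu,a}$ has a $0$-loop, padding is absorbed and $\rho_e$ never enters.

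The real gap is in your converse. You write that after building the product morphism you will ``check that $\rho_e$ commutes with iteration of $\mu$ so that the trimmed fixed point is the genuine object and is still shape-symmetric.'' But $\rho_e$ does not commute with a morphism in any usable sense: the $\rho_e$-image of a fixed point need not be a fixed point of anything, so there is nothing to ``check''---a new morphism has to be \emph{constructed}. The paper's converse contains two substantial steps your plan elides. First, to exhibit $\rho(\mu_{\mathcal P}^\omega(p_0))$ as a morphic word, it runs the multidimensional analogue of the Allouche--Shallit elimination of erasing morphisms: pass to a power of $\mu_{\mathcal P}$ so that every letter becomes dead, moribund, or robust relative to $F^c$; introduce a new alphabet $\Xi=\{\alpha(p,i,j)\}$ in which each robust $p=(q,k,\ell)$ is split into $h_k\cdot h_\ell$ symbols according to a block factorisation of $\lambda_\Delta(\mu_{\mathcal P}(p))$ with exactly one $F$-letter per block; and define the new morphism $\mu$ on $\Xi$ together with a coding $\nu'$ so that $\nu'\circ\mu^n$ applied to the block array equals $\rho\circ\mu_{\mathcal P}^{n+1}(p)$. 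Neither your ``co-accessible successors'' box nor a hoped-for commutation produces this; non-final co-accessible states of $\mathcal{A}_L$ must be traversed and then absorbed, which is exactly what the robust/moribund bookkeeping and the $\Xi$-alphabet accomplish. Second, shape-symmetry of $\mu^\omega(\alpha)$ is not automatic from ``the side rule is the same function in every direction'': the paper proves it by a separate induction on the power of $\mu$, showing that if $w_{m,n}=\alpha((q,k,\ell),i,j)$ then $w_{n,m}=\alpha((q',\ell,k),j,i)$ for some $q'\in Q_{\mathcal A}$. Both pieces are missing from your outline, and the first is where the work lies.
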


The case $d=1$ is proved in \cite{RM}. It is a natural
generalization of the classical Cobham's theorem from 1972 \cite{Cob}.
For the sake of clarity, we make the proof in the case $d=2$.
We split the proof into two parts.

\medskip

{\bf Part 1.}
Assume that $x=\nu(\mu^\omega(a))$ where
$\nu:\Sigma\to\Gamma$ is a coding and $\mu:\Sigma\to B_2(\Sigma)$ is a
$2$-dimensional morphism prolongable on $a$ such that
$y=\mu^\omega(a)$ is shape-symmetric. We show in this part that
$x$ is $S$-automatic for some $S=(L,\Sigma,<)$ where $\varepsilon\in L$.

\medskip

Let $Y_1=(y_{n,0})_{n\ge 0}$ be the first line of $y$. This word $Y_1$
is a unidimensional infinite word over a subset $\Sigma_1$ of
$\Sigma$.  It is clear that $Y_1$ is generated by a unidimensional morphism $\mu_1$
derived from $\mu$ (one has only to consider the first line occurring
in the images by $\mu$ of the letters in $\Sigma$).

\begin{definition}\label{def:dir}
    With each (unidimensional) morphism $\mu:\Sigma\to \Sigma^*$
    and with each letter $a\in\Sigma$ we can canonically associate
    a DFA denoted by $\mathcal{A}_{\mu,a}$ and defined as follows.
    Let $r_\mu:=\max_{b\in \Sigma}|\mu(b)|$. The alphabet of
    $\mathcal{A}_{\mu,a}$ is $\{0,\ldots,r_\mu-1\}$. The set of
    states is $\Sigma$. The initial state is $a$ and every state is final.
    The (partial)
    transition function $\delta_\mu$ is defined by
    $\delta_\mu(b,i)=(\mu(b))_i$, for all $b\in \Sigma$ and
    $i\in\{0,\ldots,|\mu(b)|-1\}$. The language
    accepted by $\mathcal{A}_{\mu,a}$ from which are removed the
    words having $0$ as a prefix is called the {\it directive language}
    of $(\mu,a)$ and is denoted by $L_{\mu,a}$. Note that
    $L_{\mu,a}$ is a prefix language since all states in
    $\mathcal{A}_{\mu,a}$ are final. In particular, we have
    $\varepsilon\in L_{\mu,a}$. The reason why we call it {\em
    directive} will be clear, see Lemma~\ref{lem:dir} and
    Lemma~\ref{cor:dir}.
\end{definition}

\begin{example}\label{exa:run2}
    Considering Example~\ref{exa:run}, $\Sigma_1=\{a,b,e\}$,
    $\mu_1:a\mapsto ab,\ b\mapsto e,\ e\mapsto eb$ and
    $Y_1=abeebebeebeebebeebebeeb\cdots$. The DFA associated with $(\mu_1,a)$ is
    depicted in Figure~\ref{fig:dfacar}.
\begin{figure}[htbp]
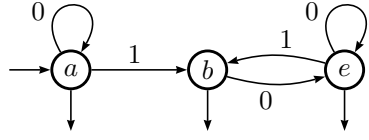

        \centering
\VCDraw{%
        \begin{VCPicture}{(0,-1)(6,1)}
 \State[a]{(0,0)}{a}
 \State[b]{(3,0)}{b}
 \State[e]{(6,0)}{e}
\Initial{a}
\Final[s]{a}
\Final[s]{b}
\Final[s]{e}
\LoopN{a}{0}
\EdgeL{a}{b}{1}
\ArcR{b}{e}{0}
\LoopN{e}{0}
\ArcR{e}{b}{1}
\end{VCPicture}
}
        \caption{The automaton $\mathcal{A}_{\mu_1,a}$.}
        \label{fig:dfacar}
    \end{figure}
The first words in the directive language of $(\mu_1,a)$ are
$$L_{\mu_1,a}=\{\varepsilon,1,10,100,101,1000,1001,1010,10000,\ldots\}.$$
\end{example}

\begin{lemma}\label{lem:dir}
    Let $\mu:\Sigma\to\Sigma^*$ be a morphism prolongable on
    $a\in\Sigma$. Let $S$ be the abstract numeration
    system built on the directive language $L_{\mu,a}$ of
    $(\mu,a)$ with the ordered alphabet $(\{0,\ldots,r_\mu-1\},0<\cdots<r_\mu-1)$.
    Then, for the infinite word $\mu^\omega(a)=y_0y_1y_2\cdots$ and for all $n\ge 0$,
    we have
    $$y_n=\delta_\mu(a,\rep_S(n))$$
    and
    $$\mu(y_n)=\mu^\omega(a)[\val_S(\rep_S(n)0),\val_S(\rep_S(n)(|\mu(y_n)|-1))].$$
\end{lemma}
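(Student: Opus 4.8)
The plan is to exhibit an explicit bijection between positions of $y=\mu^\omega(a)$ and words of $L_{\mu,a}$, show that it coincides with $\rep_S$, and then read off both identities. Since $\mu$ is prolongable on $a$, the fixed point satisfies $y=\mu(y)=\mu(y_0)\mu(y_1)\mu(y_2)\cdots$, so the image $\mu(y_n)$ occupies exactly the block of positions $[\pi(n),\pi(n)+|\mu(y_n)|-1]$ of $y$, where I write $\pi(n):=\sum_{k<n}|\mu(y_k)|$; in particular $y_{\pi(n)+j}=(\mu(y_n))_j$ for $0\le j<|\mu(y_n)|$. The automaton is built precisely so that $\delta_\mu(q,j)=(\mu(q))_j$, i.e. reading a digit $j$ realizes one step of this block descent. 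I therefore define $\mathrm{pos}\colon L_{\mu,a}\to\mathbb{N}$ by the recursion $\mathrm{pos}(\varepsilon)=0$ and $\mathrm{pos}(wj)=\pi(\mathrm{pos}(w))+j$ (this is well defined since $L_{\mu,a}$ is a prefix language), and the whole proof reduces to showing $\mathrm{pos}=\val_S$.

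The key step is to prove that $\mathrm{pos}$ is a strictly increasing bijection from $(L_{\mu,a},\prec)$ to $(\mathbb{N},<)$; since $\val_S$ is by definition the unique such order isomorphism, this forces $\mathrm{pos}=\val_S$. For order preservation I treat the two defining cases of $\prec$. Iterating the block decomposition $L$ times, one checks that $\min_{|s|=L}\mathrm{pos}_c(s)=\pi^{(L)}(c)$ and $\max_{|s|=L}\mathrm{pos}_c(s)=\pi^{(L)}(c+1)-1$, where $\mathrm{pos}_c$ denotes descent started at position $c$ and $\pi^{(L)}$ is the $L$-fold iterate of $\pi$. If $u=pas\prec u'=pbt$ with $|u|=|u'|$ and $a<b$, then after the common prefix $p$ the two descents start from $c_a=\pi(\mathrm{pos}(p))+a<c_b=\pi(\mathrm{pos}(p))+b$, and these bounds give $\mathrm{pos}(u)\le\pi^{(|s|)}(c_a+1)-1<\pi^{(|t|)}(c_b)\le\mathrm{pos}(u')$ because $c_a+1\le c_b$ and $\pi$ (hence each $\pi^{(L)}$) is nondecreasing. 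If instead $|u|<|u'|$, one uses that a word of length $\ell$ satisfies $\mathrm{pos}(u)\le\pi^{(\ell)}(1)-1$, whereas a word of length $\ell+1$ with no leading zero has first digit $\ge 1$ and hence $\mathrm{pos}(u')\ge\pi^{(\ell)}(1)$; monotonicity of $\ell\mapsto\pi^{(\ell)}(1)$ (which holds since each $|\mu(y_k)|\ge 1$, so $\pi(x)\ge x$) closes the gap.

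Bijectivity is obtained by an induction showing that $\mathrm{pos}$ maps the words of length $\le\ell$ bijectively onto $\{0,\ldots,N_\ell-1\}$ with $N_\ell:=\pi^{(\ell)}(1)=|\mu^\ell(a)|$; the inductive step uses that the blocks $[\pi(c),\pi(c+1)-1]$ for $c\in[N_{\ell-1},N_\ell-1]$ tile $[N_\ell,N_{\ell+1}-1]$ exactly, the single removed leading-zero word $0$ being precisely the redundant representation of position $0=\mathrm{pos}(\varepsilon)$. Once $\mathrm{pos}=\val_S$ is established, the first identity $y_n=\delta_\mu(a,\rep_S(n))$ follows by an easy induction on $|\rep_S(n)|$ using $y_{\pi(c)+j}=(\mu(y_c))_j$ and $\delta_\mu(q,j)=(\mu(q))_j$; and the second identity is immediate, since $\val_S(\rep_S(n)\,0)=\pi(n)$ and $\val_S(\rep_S(n)\,(|\mu(y_n)|-1))=\pi(n)+|\mu(y_n)|-1$ delimit exactly the block occupied by $\mu(y_n)$.

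The main obstacle is the order-preservation and bijectivity of $\mathrm{pos}$, i.e. showing that the purely combinatorial genealogical order on $L_{\mu,a}$ matches the positional order induced by the morphism. The delicate bookkeeping is the length-comparison case together with the correct handling of the removed leading-zero words, which is exactly what makes $\val_S$ a bijection and what accounts for the otherwise-puzzling edge case $n=0$, where $\rep_S(0)\,0=0\notin L_{\mu,a}$ must be read, via the leading-zero convention, as the representation of position~$0$. Everything else, namely the block structure of $y$ and the two inductions, is routine.
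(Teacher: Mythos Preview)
Your argument is correct and rests on the same block-decomposition of $y=\mu(y)$ as the paper's, but the organization is genuinely different. The paper counts words of each length in $L_{\mu,a}$ via the adjacency matrix of $\mathcal{A}_{\mu,a}$, deduces the equivalence $|\rep_S(n)|=s\Leftrightarrow |\mu^{s-1}(a)|\le n<|\mu^s(a)|$, and then runs a direct induction on $s$, computing $\val_S(\rep_S(m)i)$ explicitly as a sum of block lengths. You instead build the candidate inverse $\mathrm{pos}$ and prove it is the unique order isomorphism $(L_{\mu,a},\prec)\to(\mathbb{N},<)$, whence $\mathrm{pos}=\val_S$; this more structural packaging buys you a uniform treatment of both identities and a clean discussion of the $n=0$ edge case, at the cost of the extra bookkeeping in the two $\prec$-cases. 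One small imprecision worth fixing: your justification of the monotonicity of $\ell\mapsto\pi^{(\ell)}(1)$ via ``each $|\mu(y_k)|\ge 1$'' fails for erasing morphisms, but the conclusion is fine anyway since $\pi^{(\ell)}(1)=|\mu^\ell(a)|$ and prolongability already gives $|\mu^\ell(a)|\le|\mu^{\ell+1}(a)|$; the rest of your bounds only need that $\pi$ is nondecreasing, which always holds.
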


\begin{proof}
    The adjacency matrix $M\in\mathbb{N}^{\Sigma\times\Sigma}$ of
    $\mathcal{A}_{\mu,a}$ is defined for all $b,c\in\Sigma$ by
    $M_{b,c}=\#\{i:\delta_\mu(b,i)=c\}$. For all $s>0$,
    $[M^s]_{b,c}$ is the number of paths of length $s$ from $b$ to $c$
    in $\mathcal{A}_{\mu,a}$. Since all states are final, the number
    $N_s$ of words of length $s$ accepted by $\mathcal{A}_{\mu,a}$ is
    obtained by summing up all the entries of $M^s$ in the row
    corresponding to $a$. Because $\mathcal{A}_{\mu,a}$ has a loop of
    label $0$ in $a$, the number of words of length $s$ accepted by
    $\mathcal{A}_{\mu,a}$ and starting with $0$ is equal to the number
    $N_{s-1}$ of words of length $s-1$ accepted by
    $\mathcal{A}_{\mu,a}$.  Consequently, the number of words of
    length $s$ in the directive language $L_{\mu,a}$ is exactly
    $N_s-N_{s-1}$. Of course, the matrix $M$ can also be related to the
    morphism $\mu$ and $M_{b,c}$ is also the number of occurrences of
    $c$ in $\mu(b)$. In particular, summing up all entries in the row
    of $M^s$ corresponding to $a$ gives $|\mu^s(a)|$. Therefore, the
    number of words of length $s$ in the directive language
    $L_{\mu,a}$ is $|\mu^s(a)|-|\mu^{s-1}(a)|$ and we get that
      \begin{gather}\label{ssi}
    |\rep_S(n)|=s \Leftrightarrow n\in\{|\mu^{s-1}(a)|,\ldots,|\mu^s(a)|-1\}.
    \end{gather}
    In particular, if $0<n<|\mu(a)|$, we have $|\rep_S(n)|=1$ and in
    this case $\rep_S(n)=n$.  Since we have $\rep_S(0)=\varepsilon$
    and $\mu(a)=au$, for some $u\in\Sigma^*$, we get
    $y_0=a=\delta_\mu(a,\rep_S(0))$.  Hence, by the definition of
    $\mathcal{A}_{\mu,a}$, we have that $y_n=\delta_\mu(a,\rep_S(n))$
    for $n<|\mu(a)|$. Now let $s>0$ and assume that
    $y_n=\delta_\mu(a,\rep_S(n))$ for all $n<|\mu^s(a)|$. 
    Let $|\mu^s(a)|\le  n<|\mu^{s+1}(a)|$. There
    exist a unique $|\mu^{s-1}(a)|\le m<|\mu^s(a)|$ such that
       $$
       \mu^{s+1}(a)=
           \underbrace{\mu^{s-1}(a)u y_m v}_{\mu^{s}(a)} 
           \mu(u) \underbrace{x y_n y}_{\mu(y_m)}\mu(v),
       $$
                for some words $u,v,x,z$.
                Therefore
                $y_n=(\mu(y_m))_i$ for some
                $i\in\{0,\ldots,|\mu(y_m)|-1\}.$ Then by the definition of
                $\mathcal{A}_{\mu,a}$, we have
                $$y_n=\delta_\mu(y_m,i)=\delta_\mu(\delta_\mu(a,\rep_S(m)),i)
                =\delta_\mu(a,\rep_S(m)i)$$
                and in view of condition (\ref{ssi}) and again by the definition of
                $\mathcal{A}_{\mu,a}$, we get
                $$\val_S(\rep_S(m)i)=|\mu^{s}(a)|+|\mu(y_{|\mu^{s-1}(a)|})|
                +\cdots+|\mu(y_{m-1})|+i=n.$$
                Hence, $\rep_S(n)=\rep_S(m)i$ and the result follows.
    \kom{Now let $n\ge|\mu(a)|$ and assume that
    $y_\ell=\delta_\mu(a,\rep_S(\ell))$ for all $0\le \ell<n.$ There
    exist unique $m<n$ and $t\ge1$ such that
                $$\mu^{\omega}(a)=\mu^{t-1}(a)y_{|\mu^{t-1}(a)|}\cdots y_m \cdots
                y_{|\mu^{t}(a)|-1} \mu(y_{|\mu^{t-1}(a)|})
                \cdots\underbrace{\cdots
                  y_n\cdots}_{\mu(y_m)}\cdots.$$
Otherwise stated, for some $u,v,x,z$, we have the following
factorizations:
$$\mu^t(a)=\mu^{t-1}(a)uy_mv,\ \mu^{t+1}(a)=\mu^t(a)\mu(u)\mu(y_m)\mu(v)\text{ and } \mu(y_m)=xy_nz.$$
Therefore
                $y_n=(\mu(y_m))_i$ for some
                $i\in\{0,\ldots,|\mu(y_m)|-1\}.$ Then by the definition of
                $\mathcal{A}_{\mu,a}$, we have
                $$y_n=\delta_\mu(y_m,i)=\delta_\mu(\delta_\mu(a,\rep_S(m)),i)
                =\delta_\mu(a,\rep_S(m)i)$$
                and in view of condition (\ref{ssi}) and again by the definition of
                $\mathcal{A}_{\mu,a}$, we get
                $$\val_S(\rep_S(m)i)=|\mu^{t}(a)|+|\mu(y_{|\mu^{t-1}(a)|})|
                +\cdots+|\mu(y_{m-1})|+i=n.$$
                Hence, $\rep_S(n)=\rep_S(m)i$ and the result follows.}
\end{proof}
The following lemma is simply another formulation of the previous
result.

\begin{lemma}\label{cor:dir}
    Let $\mu:\Sigma\to\Sigma^*$ be a morphism prolongable on
    $a\in\Sigma$ and let $\mu^\omega(a)=y_0y_1y_2\cdots$. Let $S$ be
    the abstract numeration
    system built on the directive language $L_{\mu,a}$ of
    $(\mu,a)$ with the ordered alphabet $(\{0,\ldots,r_\mu-1\},0<\cdots<r_\mu-1)$. 
    Let $n\ge 0$ and $\rep_S(n)=w_0\cdots w_\ell$, where $w_j$'s are letters.
    Define $z_0:=\mu(a)$ and for
    $j=0,\ldots,\ell-1$, set $z_{j+1}:=\mu((z_j)_{w_j})$.
    Then, $y_n=(z_\ell)_{w_\ell}$.
\end{lemma}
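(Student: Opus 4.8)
The plan is to interpret the iterative definition of the pictures (here, words) $z_j$ as a walk in the automaton $\mathcal{A}_{\mu,a}$ and to match it against the value $y_n=\delta_\mu(a,\rep_S(n))$ supplied by Lemma~\ref{lem:dir}. Write $\rep_S(n)=w_0\cdots w_\ell$ and recall that, by definition of the extended transition function, reading this word from the initial state $a$ produces the sequence of states $q_0:=a$ and $q_{j+1}:=\delta_\mu(q_j,w_j)=(\mu(q_j))_{w_j}$ for $0\le j\le\ell$. Lemma~\ref{lem:dir} then gives $y_n=\delta_\mu(a,w_0\cdots w_\ell)=q_{\ell+1}$, so it suffices to connect the $q_j$ with the $z_j$.

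The heart of the argument is the invariant $z_j=\mu(q_j)$ for all $0\le j\le\ell$, which I would establish by induction on $j$. The base case is immediate since $z_0=\mu(a)=\mu(q_0)$. For the inductive step, assume $z_j=\mu(q_j)$ for some $j<\ell$; then $(z_j)_{w_j}=(\mu(q_j))_{w_j}=\delta_\mu(q_j,w_j)=q_{j+1}$, whence $z_{j+1}=\mu((z_j)_{w_j})=\mu(q_{j+1})$, as desired. Thus the invariant propagates through $z_0,\ldots,z_\ell$.

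Applying the invariant at $j=\ell$ yields $(z_\ell)_{w_\ell}=(\mu(q_\ell))_{w_\ell}=\delta_\mu(q_\ell,w_\ell)=q_{\ell+1}=y_n$, which is exactly the claim. The degenerate case $n=0$, where $\rep_S(n)=\varepsilon$, is covered by the convention $y_0=a$ already recorded in the proof of Lemma~\ref{lem:dir}. I do not expect any genuine obstacle here: the entire content is the recognition that $z_j$ is precisely $\mu$ applied to the state reached after reading the prefix $w_0\cdots w_{j-1}$, so that extracting the $w_j$-th letter and reapplying $\mu$ advances the automaton by exactly one step. Once this invariant is spotted, the statement is a one-line consequence of Lemma~\ref{lem:dir}, which is why the authors describe it as merely another formulation of the previous result.
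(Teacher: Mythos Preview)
Your proposal is correct and matches the paper's approach: the paper gives no explicit proof, simply stating that the lemma is ``another formulation of the previous result,'' and your argument is precisely the unwinding of Lemma~\ref{lem:dir} via the invariant $z_j=\mu(q_j)$ that makes this reformulation explicit. The only quibble is the case $n=0$, where $\rep_S(0)=\varepsilon$ has no letters and the statement is vacuous rather than requiring a separate convention, but this is cosmetic.
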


\begin{example}
    Continue Example~\ref{exa:run2}. The fixed point $Y_1$ of $\mu_1$
    start with $$abeebebe=y_0\cdots y_7$$ and $\rep_S(7)=1010$. From
    Lemma~\ref{lem:dir}, $y_7=e$ has been generated applying
    $\mu_1$ to the letter in the position $\val_S(101)=4$, i.e.,
    $y_4=b$. We have $y_7=(\mu_1(b))_0$. In turn, $y_4$ occurs in the
    image by $\mu_1$ of the letter in the position $\val_S(10)=2$, $y_2=e$
    and we have $y_4=(\mu_1(e))_1$. Now $y_2$ appears in the image of the
    letter in the position $\val_S(1)=1$ and we have $y_2=(\mu_1(b))_0$.
\end{example}
The following result is obvious.

\begin{lemma}\label{lem:shape}
    Let $x,y$ be two infinite (unidimensional) words and $\lambda$,
    $\mu$ be two morphisms such that there exist letters $a,b$ such
    that $x=\lambda^\omega(a)$ and $y=\mu^\omega(b)$. The languages
    $L_{\lambda,a}$ and $L_{\mu,b}$ are equal if and only if
    $\Shape_\lambda(x)=\Shape_\mu(y)$.
\end{lemma}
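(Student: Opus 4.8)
The plan is to translate the statement into the language of the prefix-closed trees underlying the two directive languages. Since all states of $\mathcal{A}_{\lambda,a}$ are final, $L_{\lambda,a}$ is prefix-closed and is naturally identified with a tree whose nodes are its words, the children of a word $w$ being the words $wi\in L_{\lambda,a}$. By the definition of $\mathcal{A}_{\lambda,a}$, the transition $\delta_\lambda(q,i)$ is defined exactly for $i\in\{0,\ldots,|\lambda(q)|-1\}$, so a node $w$ reaching the state $q=\delta_\lambda(a,w)$ has out-degree exactly $|\lambda(q)|$ in the language accepted by $\mathcal{A}_{\lambda,a}$; the only effect of deleting the words beginning with $0$ is to remove the single edge labelled $0$ leaving the root. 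Hence, using $x_n=\delta_\lambda(a,\rep_S(n))$ and $x_0=a$ from Lemma~\ref{lem:dir}, the out-degree $d_n^\lambda$ of the $n$-th word of $L_{\lambda,a}$ (in genealogical order) equals $|\lambda(x_n)|$ for $n\ge 1$, while $d_0^\lambda=|\lambda(x_0)|-1$. Thus $L_{\lambda,a}$ and $\Shape_\lambda(x)$ determine one another through a fixed relation (the identity for $n\ge 1$, a shift by one at $n=0$), and the same holds for $L_{\mu,b}$ and $\Shape_\mu(y)$.

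The crux is then to prove that the tree, equivalently the language, is recoverable from the single sequence $(d_n^\lambda)_{n\ge 0}$, once we fix the labelling convention that the children of a non-root node $w$ are $w0,w1,\ldots,w(d-1)$ while the children of the root are $1,2,\ldots,d$. For this I would observe that the genealogical ordering is precisely a breadth-first traversal of the tree in which the children of a node are ordered by their last letter: indeed, for two words $uc$ and $u'c'$ of the same length $\ell+1$ with $c,c'$ letters, the genealogical order compares $u$ and $u'$ first, and only when $u=u'$ does it compare $c$ and $c'$. Consequently the length-$(\ell+1)$ words listed genealogically are obtained by scanning the length-$\ell$ words in genealogical order and appending to each its children in increasing last-letter order. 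This yields a deterministic reconstruction of the whole tree from $(d_n^\lambda)$: having built and genealogically indexed all nodes up to index $n$, one attaches to the $n$-th word exactly $d_n^\lambda$ children under the fixed convention, and these receive the next consecutive indices.

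With this reconstruction in hand both implications are immediate. If $L_{\lambda,a}=L_{\mu,b}$, then the two trees coincide, hence so do their out-degree sequences, and the relation between $d_n$ and the image lengths forces $\Shape_\lambda(x)=\Shape_\mu(y)$. Conversely, if $\Shape_\lambda(x)=\Shape_\mu(y)$ then $d_n^\lambda=d_n^\mu$ for every $n$, and since the reconstruction procedure depends only on this common sequence and on the fixed labelling convention, it produces the same tree in both cases, that is, $L_{\lambda,a}=L_{\mu,b}$. I expect the only delicate point to be the verification that the genealogical enumeration really is the ordered breadth-first traversal described above, together with the bookkeeping of the exceptional $0$-edge at the root; everything else is a straightforward induction on word length.
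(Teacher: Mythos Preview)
The paper offers no proof of this lemma: it is simply declared ``obvious'' immediately before its statement. Your argument is correct and makes explicit precisely the mechanism the authors presumably had in mind. The key points --- that $L_{\lambda,a}$ is a prefix-closed tree, that the out-degree of the $n$-th node in genealogical order equals $|\lambda(x_n)|$ (with the single correction at the root for the removed $0$-edge) thanks to Lemma~\ref{lem:dir}, and that the genealogical enumeration coincides with the ordered breadth-first traversal so that the out-degree sequence determines the labelled tree under the fixed child-labelling convention --- are all sound. There is nothing to compare your route against, since the paper supplies no argument; your write-up simply fills the gap.
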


\begin{example}
    If one considers the morphism $\mu_2$ defined by $a\mapsto ac$,
    $c\mapsto e$, $e\mapsto eg$, $g\mapsto h$ and $h\mapsto hc$ (which
    is derived from the first column of the bidimensional morphism in
    Example~\ref{exa:run}), we have the DFA $\mathcal{A}_{\mu_2,a}$
    depicted in Figure~\ref{fig:dfacar2}.
\begin{figure}[htbp]
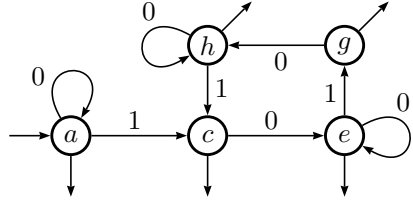

        \centering
\VCDraw{%
        \begin{VCPicture}{(0,-1)(6,3)}
 \State[a]{(0,0)}{a}
 \State[c]{(3,0)}{c}
 \State[e]{(6,0)}{e}
 \State[h]{(3,2)}{h}
 \State[g]{(6,2)}{g}
\Initial{a}
\Final[s]{a}
\Final[s]{c}
\Final[s]{e}
\Final[ne]{h}
\Final[ne]{g}
\LoopN{a}{0}
\EdgeL{a}{c}{1}
\EdgeL{c}{e}{0}
\LoopE{e}{0}
\EdgeL{e}{g}{1}
\EdgeL{g}{h}{0}
\EdgeL{h}{c}{1}
\LoopW{h}{0}
\end{VCPicture}
}
        \caption{The automaton $\mathcal{A}_{\mu_2,a}$.}
        \label{fig:dfacar2}
    \end{figure}
    The automata in Figure~\ref{fig:dfacar} and
    Figure~\ref{fig:dfacar2} clearly accept the same language (the
    first one being minimal).
\end{example}

Let $Y_2=(y_{1,n})_{n\ge 0}$ be the first column of $y$. This word
$Y_2$ is a unidimensional infinite word over a subset $\Sigma_2$ of
$\Sigma$.  It is clear that $Y_2$ is generated by a morphism $\mu_2$
derived from $\mu$. Since $y$ is shape-symmetric, thanks to
Remark~\ref{rem:shape} and to Lemma~\ref{lem:shape}, we have
$$L_{\mu_1,a}=L_{\mu_2,a}=:L.$$
We consider the abstract numeration system built upon this language
$L$ (with the natural ordering of digits). With all the above
discussion and in particular in view of Lemma~\ref{cor:dir},
it is clear that if $\rep_S(m)=ub$, $\rep_S(n)=vc$ where $b,c$ are letters,
then
\begin{equation}\label{eq:y}
(\mu(y_{\val_S(u),\val_S(v)}))_{b,c}=y_{m,n}.
\end{equation}

\begin{example}\label{exa:later}
    Consider the letter $c$ occurring in the position $(7,4)$ in
    the fixed point $y$ of $\mu$ underlined in Figure~\ref{fig:mu}. We
    have $(7,4)=(\val_S(1010),\val_S(101))$. If we consider the pair
    $(\val_S(101),\val_S(10))=(4,2)$, we get
    $(\mu(y_{4,2}))_{0,1}=(\mu(b))_{0,1}=c=y_{7,4}$.
    In other words, $y_{7,4}$ comes from $y_{4,2}$.
    We can continue this way.
    We have $b=y_{4,2}=(\mu(y_{2,1}))_{1,0}$ because
    $(\val_S(10),\val_S(1))=(2,1)$. Now $y_{2,1}=c=(\mu(y_{1,0}))_{0,1}$
    because $(\val_S(1),\val_S(\varepsilon))=(1,0)$. Finally
    $y_{1,0}=b=(\mu(y_{0,0}))_{1,0}=(\mu(a))_{1,0}$
    because $(\val_S(\varepsilon),\val_S(\varepsilon))=(0,0)$.
\end{example}

We now extend Definition~\ref{def:dir} to the multidimensional case.

\begin{definition}
For each $d$-dimensional morphism $\mu\colon \Sigma\to B_d(\Sigma)$ and for each letter $a\in\Sigma$, define a DFA $\mathcal{A}_{\mu,a}$ over the alphabet
$\{0,\ldots,r_\mu-1\}^d$ where $r_\mu=\max \{|\mu(b)|_i
:b\in\Sigma,i=1,\ldots, d\}$. The set of states is $\Sigma$, the initial state
is $a$ and all states are final. The (partial) transition function is defined by
$$\delta_\mu(b,\mathbf{n})=(\mu(b))_{\mathbf{n}},$$
for all $b\in\Sigma$ and $\mathbf{n}\le |\mu(b)|.$
\end{definition}

Thanks to \eqref{eq:y}, the automaton~$\mathcal{A}_{\mu,a}$ is such that,
for all $m,n\ge 0$,
$$y_{m,n}=\delta_{\mu}(a,(\rep_S(m),\rep_S(n))^{0}),$$
where we have padded the shortest word with enough $0$'s to make two
words of the same length as in Definition~\ref{diese}. If we consider
the coding $\nu$ as the output function, the
corresponding DFAO generates $x$ as an $S$-automatic sequence. 
Note that padding with $0$'s works correctly since $0$ 
is the lexicographically smallest letter and the directive language 
$L$ does not contain any words starting with $0$. This
concludes the first part. 

\begin{example}
                Consider the $2$-dimensional morphism $\mu$ of Example \ref{exa:run}
                and its fixed point $\mu^{\omega}(a)$ depicted in Figure \ref{fig:mu}.
                If $S=(L,\{0,1\},0<1)$ is the abstract numeration system constructed
                on $L=\{\varepsilon,1,10,100,101,1000,1001,1010,\ldots\},$
                then the corresponding DFAO depicted in Figure \ref{fig:dfao2},
                where the output function is the identity,
                generates $\mu^{\omega}(a)$ as an $S$-automatic word.
                For instance, if we continue Example \ref{exa:later},
                by reading              $(\rep_S(7),\rep_S(4))^{0}=(1010,0101)$, we get
                $$y_{0,0}=a\stackrel{(1,0)}{\rightarrow}y_{1,0}=b
                \stackrel{(0,1)}{\rightarrow}y_{2,1}=c
                \stackrel{(1,0)}{\rightarrow}y_{4,2}=b
                \stackrel{(0,1)}{\rightarrow}y_{7,4}=c, $$
                and the letters appearing in this sequence of transitions
                are exactly the underlined ones in Figure       \ref{fig:mu}.

\begin{figure}[htbp]
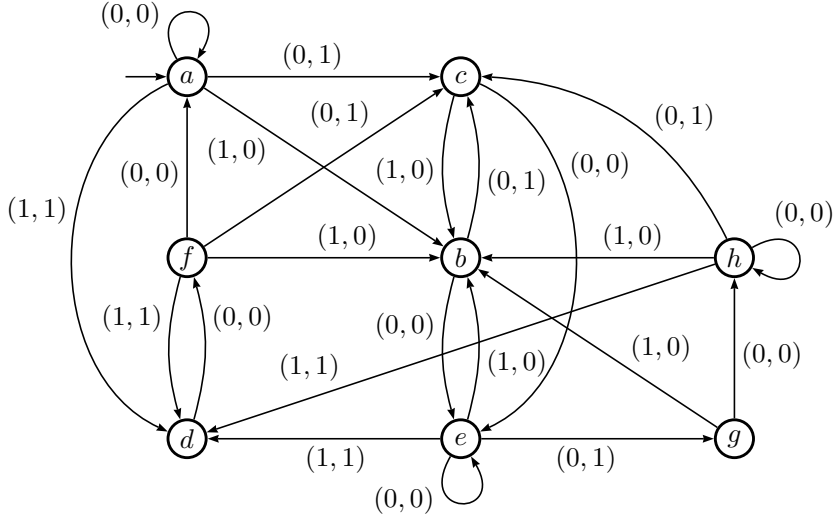

        \centering
\VCDraw{%
        \begin{VCPicture}{(0,-6)(12,6)}
 \State[a]{(0,4)}{A}
 \State[b]{(6,0)}{B}
 \State[c]{(6,4)}{C}
 \State[d]{(0,-4)}{D}
 \State[e]{(6,-4)}{E}
 \State[f]{(0,0)}{F}
 \State[g]{(12,-4)}{G}
 \State[h]{(12,0)}{H}
\Initial[w]{A}
\LoopN{A}{(0,0)}
\EdgeL{A}{C}{(0,1)}
\EdgeR[.25]{A}{B}{(1,0)}
\VArcR{arcangle=-65}{A}{D}{(1,1)}

\EdgeL{F}{A}{(0,0)}
\EdgeL[.68]{F}{C}{(0,1)}
\EdgeL[.6]{F}{B}{(1,0)}
\ArcR[.3]{F}{D}{(1,1)}

\ArcR[.35]{B}{E}{(0,0)}
\ArcR[.4]{B}{C}{(0,1)}

\ArcR[.53]{C}{B}{(1,0)}
\VArcL[.3]{arcangle=65}{C}{E}{(0,0)}

\ArcR[.7]{D}{F}{(0,0)}

\LoopS{E}{(0,0)}
\ArcR{E}{B}{(1,0)}
\EdgeL{E}{D}{(1,1)}
\EdgeR{E}{G}{(0,1)}

\EdgeR{G}{H}{(0,0)}
\EdgeR[.35]{G}{B}{(1,0)}

\LoopE{H}{(0,0)}
\EdgeR[.35]{H}{B}{(1,0)}
\EdgeR[.77]{H}{D}{(1,1)}
\VArcR{arcangle=-30}{H}{C}{(0,1)}

\end{VCPicture}
}
        \caption{DFAO generating $\mu^{\omega}(a)$ as an $S$-automatic word.}
        \label{fig:dfao2}
\end{figure}
\end{example}


\medskip

{\bf Part 2.}
Assume that $x=(x_{m,n})_{m,n\ge 0}$ is a $2$-dimensional $S$-automatic
infinite word over $\Gamma$ for some abstract numeration system $S=(L,\Sigma,<)$
where $\varepsilon\in L$ and $\Sigma=\{a_1, \ldots,a_{r}\}$ with $a_1<\cdots<a_{r}$.
Let $\mathcal{A}=(Q_\mathcal{A},q_0,(\Sigma_{\#})^2,\delta_\mathcal{A},\Gamma,\tau_\mathcal{A})$
be a deterministic finite automaton with output generating~$x$ where we may assume that $\#=:a_0$ is a symbol not belonging to $\Sigma$ and that $a_0<a_1$.
Recall that this means that
$x_{m,n}=\tau_\mathcal{A}(\delta_\mathcal{A}(q_0,(\rep_S(m),\rep_S(n))^{\#}))$ for all $m,n\ge 0$.
Without loss of generality, we suppose that
$\delta_\mathcal{A}(q,(\#,\#))=q$, for all $q\in Q_\mathcal{A}$.
In this part we prove that $x$ can be represented as the image by a coding
of a morphic shape-symmetric $2$-dimensional infinite word. We do the proof
in three steps. First, we show that $x$ can be obtained applying an erasing map
to a fixed point of a uniform $2$-dimensional morphism. In the second step we
prove that $x$ is morphic. The generating morphism $\mu$ and the coding $\nu$ are obtained
using a construction represented for dimension one in~\cite{AllSha}.
Finally, we show that the considered fixed point of $\mu$ is shape-symmetric.

\begin{definition} \label{canonicalmorphism} Let $d\ge1$.  Any DFA of
    the form $\mathcal{A}=(Q,q_0,\Sigma^d,\delta,F)$, where
    $\Sigma=\{a_0,a_1, \ldots, a_{r}\}$ with the ordering
    $a_i<a_{i+1}$ for all $0\le i\le r-1$, can be canonically
    associated with a $d$-dimensional morphism denoted by
    $\mu_\mathcal{A}\colon Q \to B_d(Q)$ and defined as follows. The
    image of a letter $q\in Q$ is a $d$-dimensional square~$x$ of size
    $r+1$ defined by
    $x_{\mathbf{n}}=\delta(q,(a_{n_1},\cdots,a_{n_d})),$ for all $\mathbf{0}\le
    \mathbf{n}=(n_1,\ldots,n_d)\le r.\mathbf{1}$.
\end{definition}

\begin{example}
Consider the alphabet $\Sigma=\{\#,a,b\}$ with $\#<a<b$ and the automaton $\mathcal{A}$ depicted in Figure \ref{fig:dfao1}
with added loops of label $(\#,\#)$ on all states.
Then we get
$$\mu_\mathcal{A}(p)=\begin{array}{|c|c|c|c|}
\hline
                p & q & q \\
\hline
    p & p & s \\
\hline
    q & p & s \\
\hline
\end{array}\; ,
\mu_\mathcal{A}(q)=\begin{array}{|c|c|c|c|}
\hline
                q & p & q \\
\hline
    p & s & q \\
\hline
    p & q & s \\
\hline
\end{array}\; ,
\mu_\mathcal{A}(r)=\begin{array}{|c|c|c|c|}
\hline
                r & s & s \\
\hline
    p & r & s \\
\hline
    p & r & p \\
\hline
\end{array}\; ,
\mu_\mathcal{A}(s)=\begin{array}{|c|c|c|c|}
\hline
                s & r & s \\
\hline
    r & q & s \\
\hline
    r & s & r \\
\hline
\end{array}$$
and ${\mu_\mathcal{A}}^\omega(p)$ is the $2$-dimensional infinite word depicted
in Figure~\ref{tab:ex2}. Notice that ${\mu_\mathcal{A}}^\omega(p)$ is different from the $S$-automatic word given in Figure~\ref{tab:ex1}. However, by erasing some rows and columns in Figure~\ref{tab:ex2}, we obtain exactly the word in Figure~\ref{tab:ex1}.

\begin{figure}[htpb]
$$\begin{array}{|ccc|ccc|ccc|ccc}
\hline
p & q & q & q & p & q & q & p & q & q & p & \cdots \\
p & p & s & p & s & q & p & s & q & p & s & \\
q & p & s & p & q & s & p & q & s & p & q & \\
\hline
p & q & q & p & q & q & s & r & s & p & q & \\
p & p & s & p & p & s & r & q & s & p & p & \\
q & p & s & q & p & s & r & s & r & q & p & \\
\hline
q & p & q & p & q & q & s & r & s & p & q & \\
p & s & q & p & p & s & r & q & s & p & p & \\
p & q & s & q & p & s & r & s & r & q & p & \\
\hline
p & q & q & q & p & q & q & p & q & p & q & \\
p & p & s & p & s & q & p & s & q & p & p & \\
\vdots & & &  &   &   &   &   &   &   &   & \ddots \\
\end{array}$$
\caption{The fixed point ${\mu_\mathcal{A}}^\omega(p)$.}
        \label{tab:ex2}
    \end{figure}
\end{example}

By assumption, $L$ is a regular language over $\Sigma$. Hence, there
exists a DFA accepting $L$ and we may easily modify it to obtain a DFA
$\mathcal{L}=(Q_\mathcal{L}, \ell_0 ,\Sigma_{\#}, \delta_\mathcal{L}, F_\mathcal{L})$ accepting
$\{\#\}^*L$ and satisfying $\delta_\mathcal{L}(l_0,\#)=l_0$. Note that $l_0$
is a final state since $\varepsilon \in L$. Let us next define a
``product'' automaton $\mathcal{P}=(Q,p_0,(\Sigma_{\#})^2,\delta,F)$
imitating the behavior of~$\mathcal{A}$ and two copies of the
automaton~$\mathcal{L}$, one for each dimension. The set of states
of $\mathcal{P}$ is the Cartesian product $Q= Q_\mathcal{A} \times
Q_\mathcal{L} \times Q_\mathcal{L}$, where the initial state $p_0$ is
$(q_0,\ell_0,\ell_0)$. The transition function $\delta \colon Q
\times (\Sigma_{\#})^2 \to Q$ is defined by
\[
\delta((q,k,\ell),(a,b))= (\delta_\mathcal{A}(q,(a,b)), \delta_\mathcal{L}(k,a), \delta_\mathcal{L}(\ell,b)),
\]
where $(q,k,\ell)$ belongs to $Q$ and $(a,b)$ is a pair of letters in $(\Sigma_{\#})^2$.
The set of final states is $F=Q_\mathcal{A} \times F_\mathcal{L} \times F_\mathcal{L}$.
Let $y=(y_{m,n})_{m,n \geq 0}$ be the infinite word satisfying
\[
y_{m,n}=\delta(p_0,(\rep_S(m),\rep_S(n))^{\#}).
\]
Note that both the first and the second component of $(\rep_S(m),\rep_S(n))^{\#}$ belong
to the language $\{\#\}^*L$ and, therefore, $\delta(p_0,(\rep_S(m),\rep_S(n))^{\#})$ is a final state.
Define $\tau \colon F \to \Gamma$ to be the coding satisfying  $\tau((q,k,\ell))=\tau_\mathcal{A}(q)$
for all $(q,k,\ell) \in F$.
By construction, it is clear that $\tau(y)=(x_{m,n})_{m,n\ge 0}$.
\kom{\begin{example}
Consider again the automaton given by Figure \ref{fig:dfao1}. The word $x$ introduced
in this proof corresponds exactly to the $2$-dimensional infinite word depicted in
Figure \ref{tab:ex1}.
\end{example}}
We consider the canonically associated morphism
$\mu_\mathcal{P}\colon Q \to \mathcal{B}_2(Q)$ given in
Definition~\ref{canonicalmorphism}. Note that $\mu_\mathcal{P}$ is
prolongable on $p_0$, since
$\delta(p_0,(a_0,a_0))=(\delta_A(q_0,(\#,\#)),\delta_\mathcal{L}(l_0,\#),\delta_\mathcal{L}(l_0,\#))=(q_0,l_0,l_0)=p_0$.
Moreover, ${\mu_\mathcal{P}}^\omega(p_0)$ is shape-symmetric with
respect to $\mu_\mathcal{P}$, since $\mu_{\mathcal{P}}(q)$ is a
square of size $r+1$ for all $q \in Q$.

\begin{example}\label{exa:continued}
Let us continue Example~\ref{exa:S-aut} and consider again the abstract numeration system $S=(\{a,ba\}^*\{\varepsilon,b\},\{a,b\},a<b)$ and the DFAO depicted in Figure~\ref{fig:dfao1},
with additional loops of label $(\#,\#)$
on all states.
The minimal automaton of $\{\#\}^* \{a,ba\}^*\{\varepsilon,b\}$ is depicted in Figure~\ref{fig:minimal}.
\begin{figure}[htbp]
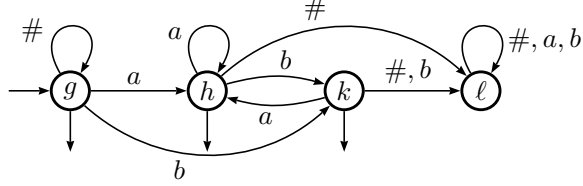

        \centering
\VCDraw{%
        \begin{VCPicture}{(0,0)(9,3)}
 \State[g]{(0,1.5)}{G}
 \State[h]{(3,1.5)}{H}
 \State[k]{(6,1.5)}{K}
 \State[\ell]{(9,1.5)}{L}
\Initial[w]{G}
\Final[s]{G}
\Final[s]{H}
\Final[s]{K}
\LoopN{G}{\#}
\EdgeL{G}{H}{a}
\VArcR{arcangle=-42}{G}{K}{b}

\VArcL{arcangle=42}{H}{L}{\#}
\LoopN{H}{a}
\ArcL[.6]{H}{K}{b}

\EdgeL{K}{L}{\#,b}
\ArcL[.6]{K}{H}{a}

\LoopN[.8]{L}{\#,a,b}
\end{VCPicture}
}
        \caption{The minimal automaton accepting $\{\#\}^* \{a,ba\}^*\{\varepsilon,b\}$.}
        \label{fig:minimal}
\end{figure}
If $\mathcal{P}$ is the corresponding product automaton, then the fixed point
${\mu_\mathcal{P}}^\omega((p,g,g))$ of $\mu_\mathcal{P}$
is the $2$-dimensional infinite word depicted in Figure~\ref{tab:productfixedpoint}.
 \begin{figure}[htpb]
$$\begin{array}{|ccc|ccc|ccc}
\hline
(p,g,g)    & (q,h,g)    & (q,k,g)    & (q,\ell,g)    & (p,h,g)    & (q,k,g)    & (q,\ell,g)    & (p,h,g)    & \cdots \\
(p,g,h)    & (p,h,h)    & (s,k,h)    & (p,\ell,h)    & (s,h,h)    & (q,k,h)    & (p,\ell,h)    & (s,h,h)    & \\
(q,g,k)    & (p,h,k)    & (s,k,k)    & (p,\ell,k)    & (q,h,k)    & (s,k,k)    & (p,\ell,k)    & (q,h,k)    & \\
\hline
(p,g,\ell) & (q,h,\ell) & (q,k,\ell) & (p,\ell,\ell) & (q,h,\ell) & (q,k,\ell) & (s,\ell,\ell) & (r,h,\ell) & \\
(p,g,h)    & (p,h,h)    & (s,k,h)    & (p,\ell,h)    & (p,h,h)    & (s,k,h)    & (r,\ell,h)    & (q,h,h)    & \\
(q,g,k)    & (p,h,k)    & (s,k,k)    & (q,\ell,k)    & (p,h,k)    & (s,k,k)    & (r,\ell,k)    & (s,h,k)    & \\
\hline
(q,g,\ell) & (p,h,\ell) & (q,k,\ell) & (p,\ell,\ell) & (q,h,\ell) & (q,k,\ell) & (s,\ell,\ell) & (r,h,\ell) & \\
(p,g,h)    & (s,h,h)    & (q,k,h)    & (p,\ell,h)    & (p,h,h)    & (s,k,h)    & (r,\ell,h)    & (q,h,h)    & \\
(p,g,\ell)    & (p,h,\ell)    & (s,k,\ell)    & (p,\ell,\ell)    & (p,h,\ell)    & (s,k,\ell)    & (r,\ell,\ell)    & (s,h,\ell)    & \\
\hline
(p,g,\ell) & (q,h,\ell) & (q,k,\ell) & (q,\ell,\ell) & (p,h,\ell) & (q,k,\ell) & (q,\ell,\ell) & (p,h,\ell) & \\
(p,g,\ell) & (p,h,\ell) & (s,k,\ell) & (p,\ell,\ell) & (s,h,\ell) & (q,k,\ell) & (p,\ell,\ell) & (s,h,\ell) & \\
\kom{(q,g,\ell) & (p,h,\ell) & (s,k,\ell) & (p,\ell,\ell) & (q,h,\ell) & (s,k,\ell) & (p,\ell,\ell) & (q,h,\ell) & \\}
\vdots     &            &            &               &            &            &               &           & \ddots \\
\end{array}$$
\caption{The fixed point ${\mu_\mathcal{P}}^\omega((p,g,g))$.}
        \label{tab:productfixedpoint}
    \end{figure}
\end{example}
Let $e$ be a new symbol. Recall that $\rho_e$ is the erasing map
given in Definition~\ref{def:rho}. Denote $\rho=\rho_e \circ
\lambda$, where $\lambda$ is a morphism on $Q \cup \{e\}$ defined by
\[
\lambda(p)= \left\{ \begin{array}{ll}
e & \text{if $p \not \in F$,}\\
p & \text{otherwise}.
\end{array}
\right.
\]
We claim that $y=\rho({\mu_\mathcal{P}}^\omega(p_0))$. Observe that
the infinite word $\lambda({\mu_\mathcal{P}}^\omega(p_0))$ is
$e$-erasable. Namely, all letters in a fixed column~$C$ of the
infinite bidimensional word ${\mu_\mathcal{P}}^\omega(p_0)$ are of
the form $(q,k,\ell)$ where the second component~$k$ is fixed. If
$k$ does not belong to $F_\mathcal{L}$, the word $\lambda(C)$ is a
unidimensional $e$-hyperplane of
$\lambda({\mu_\mathcal{P}}^\omega(p_0))$. Thus, the map~$\rho$
erases all columns where the second component~$k$ does not belong
to~$F_\mathcal{L}$. The same holds for rows and third components~$\ell$ of the
letters in~$Q$. Hence, the $2$-dimensional infinite word
$\rho({\mu_\mathcal{P}}^\omega(p_0))$ contains only letters
belonging to $F$. By the construction of the
morphism~$\mu_\mathcal{P}$, those letters are coming from the
automaton $\mathcal{P}$ by feeding it with words belonging
to~$((\Sigma_{\#})^2)^* \cap (\{\#\}^*L)^2$. More precisely, all
rows and columns not belonging to~$y$ are erased and
$(\rho({\mu_\mathcal{P}}^\omega(p_0)))_{m,n}$ is equal to
$\delta(p_0,(\rep_S(m),\rep_S(n))^{\#})=y_{m,n}$. Hence, defining
$\vartheta=\tau \circ \rho$, we get a map from $\Sigma$ to $\Gamma$
such that $x=\vartheta({\mu_\mathcal{P}}^\omega(p_0))$.

\begin{example}\label{exa:continued2}
We continue Example~\ref{exa:continued} and we consider this time
the bidimensional infinite $S$-automatic word depicted in Figure
\ref{tab:ex1}. This word is exactly the $2$-dimensional infinite
word obtained by first erasing all columns with $\ell$ as the second
component and all rows with $\ell$ as the third component from the
$2$-dimensional infinite word ${\mu_\mathcal{P}}^\omega((p,g,g))$
depicted in Figure~\ref{tab:productfixedpoint} and then mapping the
infinite word by~$\tau$.
\end{example}

Next we show that $x$ is morphic by getting rid of the erasing map~$\rho$.
We construct a morphism $\mu$ prolongable on some letter $\alpha$
and a coding $\nu$ such that $x=\nu(\mu^\omega(\alpha))$.
We follow the guidelines of \cite[Theorem 7.7.4]{AllSha}.
First we need the following definitions.

\begin{definition}
Let $\mu$ be a morphism on some finite alphabet $\Sigma$ and let $\Psi \subseteq \Sigma$. We say that a letter $a \in \Sigma$ is
\begin{enumerate}
\item[(i)] \emph{$(\mu, \Psi)$-dead} if the word $\mu^n(a) \in \Psi^*$ for every $n\geq0$.
\item[(ii)] \emph{$(\mu, \Psi)$-moribund} if there exists $m\geq0$ such that the word $\mu^m(a)$ contains at least one letter in $\Sigma \setminus \Psi$, and for every $n>m$, $\mu^n(a) \in \Psi^*$.
\item[(iii)] \emph{$(\mu, \Psi)$-robust} if there exist infinitely many  $n\geq0$ such that the word $\mu^n(a)$ contains at least one letter in $\Sigma \setminus \Psi$.
\end{enumerate}
\end{definition}
The following lemma from~\cite[Lemma 7.7.3]{AllSha} is valid also for multidimensional morphisms, since the proof is only based on the finiteness of the alphabet~$\Sigma$.

\begin{lemma}\label{lmAllSha}
Let $\mu$ be a morphism on some finite alphabet $\Sigma$ and let $\Psi \subseteq \Sigma$. Then there exists an integer $T\geq1$ such that the morphism $\varphi=\mu^T$ satisfies:
\begin{enumerate}
\item[(a)] If $a$ is $(\varphi, \Psi)$-moribund, then $\varphi^n(a) \in \Psi^*$ for all $n>0$ and $a \in \Sigma \setminus \Psi$.
\item[(b)] If $a$ is $(\varphi, \Psi)$-robust, then the word $\varphi^n(a)$ contains at least one letter in $\Sigma \setminus \Psi$ for all $n>0$.
\end{enumerate}
\end{lemma}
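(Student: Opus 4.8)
The plan is to reduce the whole statement to the eventual periodicity of one orbit inside the finite power set $2^\Sigma$; this is the only place where the finiteness of $\Sigma$ is used, matching the remark preceding the lemma. For a letter $a$ I would track $X_n := \Fact_{\mathbf{1}}(\mu^n(a))$, the set of letters occurring in the picture $\mu^n(a)$, and note that the three notions dead/moribund/robust depend on $a$ only through whether $X_n\subseteq\Psi$ or not. The first step is to show that $(X_n)_{n\ge 0}$ is the forward orbit of $\{a\}$ under the map $\sigma\colon 2^\Sigma\to 2^\Sigma$, $\sigma(X)=\bigcup_{b\in X}\Fact_{\mathbf{1}}(\mu(b))$. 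Indeed, $\mu^{n+1}(a)$ is obtained from $\mu^n(a)$ by substituting the block $\mu(b)$ for every cell carrying a letter $b$, so the letters occurring in $\mu^{n+1}(a)$ are exactly $\sigma(X_n)$; hence $X_0=\{a\}$ and $X_{n+1}=\sigma(X_n)$. Because a morphism acts cellwise, this identity is insensitive to the dimension $d$.

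Since $2^\Sigma$ is finite, each orbit $(X_n)_n$ is eventually periodic: there are $N_a\ge 0$ and $p_a\ge 1$ with $X_{n+p_a}=X_n$ for all $n\ge N_a$. It follows that $P(a):=\{n\ge 0 : X_n\not\subseteq\Psi\}$ is an eventually periodic subset of $\mathbb{N}$ with threshold $N_a$ and period $p_a$, and that $a$ is $(\mu,\Psi)$-dead, moribund or robust exactly when $P(a)$ is empty, finite nonempty or infinite.

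I would then choose $T$ explicitly. Let $p$ be a common multiple of $\{p_a : a\in\Sigma\}$, let $N=\max_{a}N_a$, and take $T$ to be any multiple of $p$ with $T\ge\max(N,1)$; put $\varphi=\mu^T$. For every $n\ge 1$ we have $nT\ge N\ge N_a$ and $nT\equiv 0\pmod{p_a}$, so in the periodic regime the membership of $nT$ in $P(a)$ is governed by the single residue $0$ and is therefore the same for all $n\ge 1$. Thus for each $a$ exactly one of two cases occurs: either $\varphi^n(a)=\mu^{nT}(a)$ contains a letter of $\Sigma\setminus\Psi$ for all $n\ge 1$, or $\varphi^n(a)\in\Psi^*$ for all $n\ge 1$. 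In the first case $a$ is $(\varphi,\Psi)$-robust and (b) holds verbatim. In the second case $a$ cannot be robust; if it is moribund, its unique ``living level'' can only be $n=0$, which forces $a=\varphi^0(a)\in\Sigma\setminus\Psi$ while $\varphi^n(a)\in\Psi^*$ for all $n>0$, and this is exactly (a).

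The only substantive step is the first one: the cellwise identity $X_{n+1}=\sigma(X_n)$ together with the pigeonhole in $2^\Sigma$, from which the eventual periodicity of every $P(a)$ follows; the rest is bookkeeping along the arithmetic progression $T\mathbb{N}$. I expect the main subtlety to be essentially notational, namely reading the unidimensional expression ``$\mu^n(a)\in\Psi^*$'' for $d\ge 2$ as ``every letter of the picture $\mu^n(a)$ lies in $\Psi$'', so that the orbit argument on letter-sets applies unchanged; once this identification is made, nothing in the proof depends on the dimension, in agreement with the cited remark.
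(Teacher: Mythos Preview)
Your argument is correct. The paper itself does not prove this lemma: it merely cites \cite[Lemma 7.7.3]{AllSha} and remarks that the proof carries over to multidimensional morphisms because it depends only on the finiteness of $\Sigma$. Your proposal supplies exactly such a self-contained proof, and the mechanism you isolate---the orbit $X_n=\Fact_{\mathbf 1}(\mu^n(a))$ under $\sigma(X)=\bigcup_{b\in X}\Fact_{\mathbf 1}(\mu(b))$ in the finite set $2^\Sigma$, hence eventually periodic by pigeonhole---is precisely the ``finiteness of $\Sigma$'' alluded to in the paper and is the standard approach from \cite{AllSha}. The choice of $T$ as a common multiple of the periods exceeding all thresholds, and the case split on whether $nT\in P(a)$ for $n\ge 1$, is the clean way to conclude; your handling of the moribund case (forcing the unique surviving level to be $n=0$, hence $a\in\Sigma\setminus\Psi$) is exactly what is needed for (a). Your observation that the identity $X_{n+1}=\sigma(X_n)$ is dimension-independent is the point of the paper's remark and is correctly justified. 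In short, you have written out the proof the paper chose to outsource, along the same lines.
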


\begin{remark}\label{rem:psi}
Note that by Lemma~\ref{lmAllSha} a letter in $\Psi$ is either $(\varphi, \Psi)$-dead or $(\varphi, \Psi)$-robust and a letter in $\Sigma \setminus \Psi$ is either
$(\varphi, \Psi)$-moribund or $(\varphi, \Psi)$-robust.
\end{remark}

We may assume, by taking a power of $\mu_\mathcal{P}$ if necessary,
that $\mu_\mathcal{P}$ satisfies the properties (a) and (b) listed
for $\varphi$ in~Lemma~\ref{lmAllSha} with $\Psi=F^c:=Q\setminus F$. For the sake of simplicity, we use the words \emph{dead}, \emph{moribund} and \emph{robust} instead of $(\mu_\mathcal{P}, F^c)$-dead, $(\mu_\mathcal{P}, F^c)$-moribund and $(\mu_\mathcal{P}, F^c)$-robust from now on. 

Next we classify the states of $Q_\mathcal{L}$ and $Q$ into four categories.
The \emph{type of a state} $k \in Q_\mathcal{L}$ is
\[
T_k= \left \{ \begin{array}{ll}
\Delta & \text{if $k \not \in F_\mathcal{L}$ and $\delta_\mathcal{L}(k, a) \not \in F_\mathcal{L}$ for every $a \in \Sigma_{\#}$,}\\
M & \text{if $k \in F_\mathcal{L}$ and $\delta_\mathcal{L}(k, a) \not \in F_\mathcal{L}$ for every $a \in \Sigma_{\#}$,}\\
R_{F^c} & \text{if $k \not \in F_\mathcal{L}$ and there exists a letter $a \in \Sigma_{\#}$ such that $\delta_\mathcal{L}(k, a) \in F_\mathcal{L}$},\\
R_{F} & \text{if $k \in F_\mathcal{L}$ and there exists a letter $a \in \Sigma_{\#}$ such that $\delta_\mathcal{L}(k, a) \in F_\mathcal{L}$.}
\end{array}
\right.
\]

\noindent The \emph{type of a state} $p=(q,k,\ell) \in Q$ is
\[
T_p= \left \{ \begin{array}{ll}
\Delta & \text{if $p$ is dead,}\\
M & \text{if $p$ is moribund,}\\
R_{F^c} & \text{if $p\in F^c$ and $p$ is robust,}\\
R_{F} & \text{if $p\in F$ and $p$ is robust.}
\end{array}
\right.
\]
By these definitions, it is clear that the type of $(q,k,\ell) \in Q$
only depends on the types of~$k$ and $\ell\in Q_\mathcal{L}$ according to Figure~\ref{taType}.
Note that by the properties (a) and (b) of Lemma~\ref{lmAllSha}, 
it suffices to consider transitions $\delta_\mathcal{L}(k,a)$ by each letter
$a \in \Sigma_{\#}$ instead of transitions $\delta_\mathcal{L}(k,w)$ by all
words $w$ in $(\Sigma_{\#})^*$. For instance, if the type of~$k$
is~$R_{F^c}$ and the type of $\ell$ is $R_{F}$, then $k \not \in
F_\mathcal{L}$ and $(q,k,\ell)$ belongs to~$F^c$. Moreover, there exist
$m,n\in[\![0,r]\!]$ such that $\delta_\mathcal{L}(k,a_m) \in F_\mathcal{L}$ and
$\delta_\mathcal{L}(\ell,a_n) \in F_\mathcal{L}$. This means that
$(\mu_\mathcal{P}((q,k,\ell)))_{m,n}$ belongs to $F$. Hence, by
Lemma~\ref{lmAllSha} and Remark~\ref{rem:psi}, $(q,k,\ell)$ is
robust.

\begin{figure}[htb!]
\begin{center}
\begin{tabular}{|c|cccc|}
\hline
\backslashbox{$T_\ell$}{$T_k$} & $\Delta$ & $M$ & $R_{F^c}$ & $R_{F}$\\
\hline
$\Delta$ & $\Delta$ & $\Delta$ & $\Delta$ & $\Delta$\\
$M$ & $\Delta$ & $M$  & $\Delta$ & $M$\\
$R_{F^c}$ & $\Delta$ & $\Delta$ & $R_{F^c}$ & $R_{F^c}$\\
$R_{F}$ & $\Delta$ & $M$ & $R_{F^c}$ & $R_{F}$\\
\hline
\end{tabular}
\end{center}
\caption{Type $T_p$ of a letter $p=(q,k,\ell) \in Q$.}
\label{taType}
\end{figure}
Let us define two morphisms $\lambda_\Delta$ and $\lambda_M$ on $Q\cup \{e\}$ in a similar way as $\lambda$ was defined above :

\begin{eqnarray*}
\lambda_\Delta(p)&=& \left\{ \begin{array}{ll}
e & \text{if $p$ is dead,}\\
p & \text{otherwise};
\end{array}
\right.\\
\lambda_M(p)&=& \left\{ \begin{array}{ll}
e & \text{if $p$ is moribund,}\\
p & \text{otherwise}.
\end{array}
\right.
\end{eqnarray*}
By the property~(b) of Lemma~\ref{lmAllSha}, we know that if $p$ is
robust, then $\mu_\mathcal{P}(p)$ contains
at least one letter in $F$ and since every dead letter must belong to~$F^c$, the word
$\lambda_\Delta(\mu_\mathcal{P}(p))$ contains at least one letter in
$F$. For any $\ell\in Q_{\mathcal{L}}$, let us define a sequence $(d_\ell(i))_{0\le i\le h_\ell}$ such that $d_\ell(0)=0$, $d_\ell(h_\ell)=r+1$ and for all $i\in[\![0,h_\ell-1]\!]$, $d_\ell(i)<d_\ell(i+1)$ and there exists exactly one index $n\in[\![d_\ell(i),d_\ell(i+1)-1]\!]$ satisfying 
\begin{equation} \label{h}
\delta_\mathcal{L}(\ell,a_n)\in F_{\mathcal{L}}.
\end{equation} 
Note that $h_\ell$ is the number of letters $a_n\in\Sigma_\#$ satisfying condition \eqref{h}. 
Hence, for each robust letter
$p=(q,k,\ell)$, we get $h_k, h_\ell\ge 1$ and we may define the factorization
\begin{equation*}
\lambda_\Delta(\mu_\mathcal{P}(p))=
\left. \begin{array}{|cccc|}
\hline
w_p(0,0) & w_p(1,0) & \cdots & w_p(h_{k}-1,0)\\
w_p(0,1) & w_p(1,1) & \cdots & w_p(h_{k}-1,1)\\
\vdots   & \vdots & \ddots & \vdots\\
w_p(0,h_{\ell}-1) & w_p(1,h_{\ell}-1) & \cdots & w_p(h_{k}-1,h_{\ell}-1)\\
\hline
\end{array}
\right. ,
\end{equation*}
where each bidimensional picture 
$$w_p(i,j)=\lambda_\Delta(\mu_\mathcal{P}(p))[(d_k(i),d_\ell(j)),(d_k(i+1)-1,d_\ell(j+1)-1)]$$
contains exactly one letter in $F$. Now we show that if $p$ is a
robust state, the bidimensional picture
$\lambda_M(\lambda_\Delta(\mu_\mathcal{P}(p)))$ is $e$-erasable. If
$v:=\lambda_M(\lambda_\Delta(\mu_\mathcal{P}(p)))$ is not
$e$-erasable, then there must exist $m,n\ge 0$ such that
$v_{m,n}=e$, $v_{m,n'} \neq e$ for some $n'$ and $v_{m',n}\neq e$
for some $m'$. By construction, the
letter~$p'=(\mu_\mathcal{P}(p))_{m,n}=(q,k,\ell)$ is mapped to $e$
either if $T_{p'} = \Delta$ or if $T_{p'}=M$. By the same reason,
the letters $v_{m,n'}=(q',k,\ell')$ and $v_{m',n}=(q'',k',\ell)$
must be robust. Thus, there exist letters
$a_{m''},\,a_{n''}\in \Sigma_{\#}$ such that $\delta_\mathcal{L}(k,a_{m''})
\in F$ and $\delta_\mathcal{L}(\ell,a_{n''}) \in F$. Hence, it follows that
$p'=(q,k,\ell)$ is  robust, since the letter
$(\mu_\mathcal{P}(p'))_{m'',n''}$ belongs to $F$, which is a
contradiction. Then for each robust letter
$p=(q,k,\ell)$, for each $i$ with $0 \leq i < h_{k}$ and for each
$j$ with $0 \leq j < h_{\ell}$, write
$$(\rho_e(\lambda_M(w_p(i,j))))_{m,n}=:v_{p,i,j}(m,n)$$
where $(m,n)<\mathbf{s}_{p,i,j}:=|\rho_e(\lambda_M(w_p(i,j)))|$.
Note that the picture $\lambda_M(w_p(i,j))$ is $e$-erasable as a factor
of the $e$-erasable picture $\lambda_M(\lambda_\Delta(\mu_\mathcal{P}(p)))$.
Now we are ready to introduce a $2$-dimensional morphism $\mu$ on a new
alphabet $\Xi$ and a coding $\nu' \colon \Xi \to Q$ such that $y=\nu'(\mu^\omega(\alpha))$ for a letter $\alpha \in \Xi$.
The alphabet of new symbols is
\[
\Xi = \{ \alpha(p,i,j) \mid \text{$p=(q,k,\ell)$ is robust,
$0 \leq i < h_{k}$ and $0 \leq j < h_{\ell}$} \}.
\]
We define the bidimensional pictures $u_{p,i,j}(m,n)$ for each
robust letter $p=(q,k,\ell)\in Q$, $(i,j)\in
[\![0,h_{k}-1]\!]\times [\![0,h_{\ell}-1]\!]$ and
$(m,n)\le\mathbf{s}_{p,i,j}$ as follows. If
$v_{p,i,j}(m,n)=(q',k',\ell')$, then $u_{p,i,j}(m,n)$ is a picture
of shape $(h_{k'},h_{\ell'})$ such that
\[
(u_{p,i,j}(m,n))_{i',j'}=\alpha(v_{p,i,j}(m,n),i',j')
\]
for $(i',j')\in[\![0,h_{k'}-1]\!]\times [\![0,h_{\ell'}-1]\!]$.
The image of $\alpha(p,i,j)$ by morphism $\mu \colon \Xi \to \mathcal{B}_2(\Xi)$ is defined as the word
\[
\left.
\begin{array}{|cccc|}
\hline
u_{p,i,j}(0,0) & u_{p,i,j}(1,0) & \cdots & u_{p,i,j}(s_1-1,0)\\
u_{p,i,j}(0,1) & u_{p,i,j}(1,1) & \cdots & u_{p,i,j}(s_1-1,1)\\
\vdots   & \vdots & \ddots & \vdots\\
u_{p,i,j}(0,s_2-1) & u_{p,i,j}(1,s_2-1) & \cdots & u_{p,i,j}(s_1-1,s_2-1)\\
\hline
\end{array}
\right. ,
\]
where $(s_1,s_2)=\mathbf{s}_{p,i,j}$. Note that the above
concatenation of the pictures $u_{p,i,j}(m,n)$ is well defined.
Since all letters occurring on a row of~$w_p(i,j)$ are of the form
$(q',k',\ell')$ where the third component $\ell'$ is fixed, it means
that also the letters $v_{p,i,j}(m,n)$ and $v_{p,i,j}(m',n)$
occurring on the same row of $\rho_e(\lambda_M(w_p(i,j)))$ have the
same third component~$\ell'$. Hence,
$|u_{p,i,j}(m,n)|_{\widehat{1}}=|u_{p,i,j}(m',n)|_{\widehat{1}}=h_{\ell'}$
and the words $u_{p,i,j}(m,n)$ and $u_{p,i,j}(m',n)$ can be
concatenated in the direction $1$. The same holds for
$u_{p,i,j}(m,n)$ and $u_{p,i,j}(m,n')$ in the direction $2$. The
coding $\nu' \colon \Xi \to Q$ is defined by
\begin{equation}\label{eq:nu}
\nu'(\alpha(p,i,j))=\rho(w_p(i,j))).
\end{equation}
Note that by the definition of $w_p(i,j)$, there is only one letter
belonging to $F$ and the picture $\lambda(w_p(i,j))$ is
$e$-erasable, since only one letter is different from~$e$. Following
the proof of \cite[Theorem 7.7.4]{AllSha}, we may prove by induction
that
\begin{equation}\label{eqMorph}
\nu' \circ \mu^n \left (\; \begin{array}{|cccc|} \hline
\alpha(p,0,0) & \alpha(p,1,0) & \cdots & \alpha(p,h_{k}-1,0)\\
\alpha(p,0,1) & \alpha(p,1,1) & \cdots & \alpha(p,h_{k}-1,1)\\
\vdots   & \vdots & \ddots & \vdots\\
\alpha(p,0,h_{\ell}-1) & \alpha(p,1,h_{\ell}-1) & \cdots & \alpha(p,h_{k}-1,h_{\ell}-1)\\
\hline
\end{array}
\; \right) = \rho \circ \mu_\mathcal{P}^{n+1}(p)
\end{equation}
for all robust letters $p=(q,k,\ell)$
and for all $n \geq 0$.

Since $\mu_\mathcal{P}$ is prolongable on $p_0$ and
$x={\vartheta(\mu_{\mathcal{P}}}^\omega(p_0)$ is a $2$-dimensional
infinite word, $p_0$ must be a robust
letter. Therefore, we have
$(w_{p_0}(0,0))_{0,0}=v_{p_0,0,0}(0,0)=p_0$. Thus,
$(u_{p_0,0,0}(0,0))_{0,0}=\alpha(p_0,0,0)$ and, consequently, the
morphism $\mu$ is prolongable on $\alpha:=\alpha(p_0,0,0)$.
By~\eqref{eqMorph}, we have
\begin{align*}
\nu'(\mu^{n+1}(\alpha))&=\left[
\begin{array}{cc}
\nu'(\mu^n(u_{p_0,0,0}(0,0))) & U\\
V & W
\end{array}
\right ]\\
&= \left[
\begin{array}{cc}
\rho(\mu_\mathcal{P}^{n+1}(p_0))) & U\\
V & W
\end{array}
\right ], \\
\end{align*}
for all $n\ge 0$, where $U,\,V$ and $W$ are bidimensional pictures.
Since $\rho(\mu_\mathcal{P}^{n+1}(p_0))$ tends to $y$ as $n$ tends
to infinity, we have
\[
\nu'(\mu^\omega(\alpha))=\rho(\mu_\mathcal{P}^\omega(p_0))=y.
\]
Hence, defining the coding $\nu \colon \Xi \to \Gamma$ as $\nu=\tau \circ \nu'$ we obtain
\[
\nu(\mu^\omega(\alpha))=\tau(y)=x.
\]

\begin{example}
Let us continue Example~\ref{exa:continued2}. Recall that the product automaton $\mathcal{P}$ is produced from the automaton $\mathcal{A}$ depicted in Figure~\ref{fig:dfao1} and the automaton $\mathcal{L}$ depicted in Figure~\ref{fig:minimal}.
Note that the type of the state~$\ell$ in $\mathcal{L}$ is $T_\ell=\Delta$ and all other states have type $R_F$. By Figure~\ref{tab:productfixedpoint}, we see that
\[
\mu_\mathcal{P}(p,g,g) = \left. \begin{array}{|ccc|}
\hline
(p,g,g) & (q,h,g) & (q,k,g)\\
(p,g,h) & (p,h,h) & (s,k,h)\\
(q,g,k) & (p,h,k) & (s,k,k)\\
\hline
\end{array}\right.
\]
and
\[
\mu_\mathcal{P}(q,h,g) = \left. \begin{array}{|ccc|}
\hline
(q,\ell,g) & (p,h,g) & (q,k,g)\\
(p,\ell,h) & (s,h,h) & (q,k,h)\\
(p,\ell,k) & (q,h,k) & (s,k,k)\\
\hline
\end{array} \right.. 
\]
Since $h_\ell$ is the number of letters $a_n \in \Sigma_\#$ such that $\delta_\mathcal{L}(\ell,a_n)\in F_{\mathcal{L}}$, we notice that $h_g=3$ and $h_h=2$.
By Figure~\ref{taType}, we have  
$\rho_e(\lambda_\Delta(\mu_\mathcal{P}(p,g,g)))=\rho_e(\mu_\mathcal{P}(p,g,g))=\mu_\mathcal{P}(p,g,g)$ 
and
\[
\rho_e(\lambda_\Delta(\mu_\mathcal{P}(q,h,g))) = \rho_e \left (\; \begin{array}{|ccc|}
\hline
e & (p,h,g) & (q,k,g)\\
e & (s,h,h) & (q,k,h)\\
e & (q,h,k) & (s,k,k)\\
\hline
\end{array} \; \right)=
\left. \begin{array}{|cc|}
\hline
 (p,h,g) & (q,k,g)\\
 (s,h,h) & (q,k,h)\\
 (q,h,k) & (s,k,k)\\
\hline
\end{array}\right..
\]  
Since all letters in $\lambda_\Delta(\mu_\mathcal{P}(p,g,g))=\mu_\mathcal{P}(p,g,g)$ belong to $F$, the picture $w_{(p,g,g)}(i,j)$ is a square of size 1 for $(i,j)\in[\![0,h_g-1]\!]\times [\![0,h_g-1]\!]$. Consequently, 
\[
\mathbf{s}_{(p,g,g),i,j}=|\rho_e(\lambda_M(w_{(p,g,g)}(i,j)))|=(1,1)
\]
and
\[
v_{(p,g,g),i,j}(0,0)=w_{(p,g,g)}(i,j)=(\mu_P(p,g,g))_{i,j}
\]
for $(i,j) \in[\![0,2]\!]\times [\![0,2]\!]$. Especially, we have $v_{(p,g,g),0,0}(0,0)=(p,g,g)$ and 
$v_{(p,g,g),1,0}(0,0)=(q,h,g)$.
Hence, $u_{(p,g,g),0,0}(0,0)$ is a picture of shape $(h_g,h_g)=(3,3)$ such that
\[
(u_{(p,g,g),0,0}(0,0))_{i',j'}=\alpha(v_{(p,g,g),0,0}(0,0),i',j')=\alpha((p,g,g),i',j')
\]
for $(i',j')\in[\![0,2]\!]\times [\![0,2]\!]$
and the image $\mu(\alpha((p,g,g),0,0))=u_{(p,g,g),0,0}(0,0)$ is
\[
\left. \begin{array}{|ccc|}
\hline
\alpha((p,g,g),0,0) & \alpha((p,g,g),1,0) & \alpha((p,g,g),2,0)\\ 
\alpha((p,g,g),0,1) & \alpha((p,g,g),1,1) & \alpha((p,g,g),2,1)\\
\alpha((p,g,g),0,2) & \alpha((p,g,g),1,2) & \alpha((p,g,g),2,2)\\
\hline
\end{array} \right..
\]
Similarly, $|u_{(p,g,g),1,0}(0,0)|=(h_h,h_g)=(2,3)$ and 
\[
(u_{(p,g,g),1,0}(0,0))_{i',j'}=\alpha(v_{(p,g,g),1,0)}(0,0),i',j')=\alpha((q,h,g),i',j')
\]
for $(i',j')\in[\![0,1]\!]\times [\![0,2]\!]$.
Thus, the image $\mu(\alpha((p,g,g),1,0))=u_{(p,g,g),1,0}(0,0)$ is
\[
\left. \begin{array}{|cc|}
\hline
\alpha((q,h,g),0,0) & \alpha((q,h,g),1,0) \\ 
\alpha((q,h,g),0,1) & \alpha((q,h,g),1,1) \\
\alpha((q,h,g),0,2) & \alpha((q,h,g),1,2) \\
\hline
\end{array} \right..
\]
Next we apply the coding $\nu$ to the images above.
Note that 
\[
\begin{array}{rclcrcl} \vspace{0.5mm}
w_{(q,h,g)}(0,0) &=& \left. \begin{array}{|cc|}
\hline
e & (p,h,g)\\
\hline
\end{array}\right., & & w_{(q,h,g)}(1,0) &=& \left. \begin{array}{|c|}
\hline
(q,k,g)\\
\hline
\end{array}\right.,\\ \vspace{0.5mm}
w_{(q,h,g)}(0,1) &=& \left. \begin{array}{|cc|}
\hline
e & (s,h,h)\\
\hline
\end{array}\right., & & w_{(q,h,g)}(1,1) &=& \left. \begin{array}{|c|}
\hline
(q,k,h)\\
\hline
\end{array}\right.,\\
w_{(q,h,g)}(0,2) &=& \left. \begin{array}{|cc|}
\hline
e & (q,h,k)\\
\hline
\end{array}\right., & & w_{(q,h,g)}(1,2) &=& \left. \begin{array}{|c|}
\hline
(s,k,h)\\
\hline
\end{array}\right..\\
\end{array}
\]
Hence, by~\eqref{eq:nu}, we have
$\nu'(\mu(\alpha((p,g,g),0,0))) =\mu_\mathcal{P}(p,g,g)$
and
\[
\nu'(\mu(\alpha((p,g,g),1,0)))=
\begin{array}{|cc|}
\hline
 (p,h,g) & (q,k,g)\\
 (s,h,h) & (q,k,h)\\
 (q,h,k) & (s,k,k)\\
\hline
\end{array}.
\]
Since $\nu=\tau \circ \nu'$, the infinite word
$\nu(\mu^\omega(\alpha((p,g,g),0,0)))$ begins with 
\[
\nu \left( \mu(\alpha((p,g,g),0,0)) \odot^1 \mu(\alpha((p,g,g),1,0)) \right)=
\left. \begin{array}{|ccccc|}
\hline
p&q&q&p&q\\
p&p&s&s&q\\
q&p&s&q&s\\
\hline
\end{array} \right. ,
\]
which is exactly the left upper corner of the infinite word depicted in Figure~\ref{tab:ex1}. 

\end{example}

Finally, we have to show that $w=\mu^\omega(\alpha)$ is
shape-symmetric, that is \kom{In other words, we have to prove that
} for all $m,n\ge 0$, if $|\mu(w_{m,n})|=(s,t)$ then
$|\mu(w_{n,m})|=(t,s)$. First, observe that if $p=(q,k,\ell)$ is a
robust letter of $Q$, $0\le i < h_k$ and
$0\le j < h_\ell$, then the shape of $\mu(\alpha(p,i,j))$ does not
depend on $q$. More precisely, we have
\begin{equation}\label{eqshape}
|\mu(\alpha(p,i,j))|=\left( \sum_{m=0}^{s_1-1} |u_{p,i,j}(m,0)|_1, \sum_{n=0}^{s_2-1} |u_{p,i,j}(0,n)|_2 \right),
\end{equation}
where $(s_1,s_2)=\mathbf{s}_{p,i,j}$ does not depend on $q$,  the component $|u_{p,i,j}(m,0)|_1$ does not depend on $q$, $\ell$ and $j$
and, similarly, $|u_{p,i,j}(0,m)|_2$ does not depend on $q$, $k$ and
$i$. Moreover, for all $d\ge 0$, we have $|\mu^d(\alpha)|=(t_d,t_d)$
for some integer $t_d \ge0$, since $\alpha=\alpha(p_0,0,0)$ where
the second and the third component of $p_0=(q_0,l_0,l_0)$ are equal.
Hence, it suffices to show for all $m,n\ge 0$ that if
$w_{m,n}=\alpha((q,k,\ell),i,j)$ then
$w_{n,m}=\alpha((q',\ell,k),j,i)$ for some $q'$ in $Q_\mathcal{A}$.
We prove this by induction on the power~$d$ of~$\mu$. Assume that
for all $m,n \in [\![0,t_d-1]\!]$, if $(\mu^d(\alpha))_{m,n}=
\alpha((q,k,\ell),i,j)$ then
$(\mu^d(\alpha))_{n,m}=\alpha((q',\ell,k),j,i)$ for some $q'\in
Q_\mathcal{A}$. For $d=0$, the assumptions are clearly satisfied.
Consider now the letter
$$w_{m,n}=(\mu^{d+1}(\alpha))_{m,n}=:\alpha((q,k,\ell),i,j),$$ where
$m,n\in[\![0,t_{d+1}-1]\!]$ and $m$ or $n$ belongs to
$[\![t_d,t_{d+1}-1]\!]$. There exist unique
$m',n'\in[\![0,t_d-1]\!]$ such that $w_{m,n}$ is generated by
applying $\mu$ to
$$w_{m',n'}=(\mu^{d}(\alpha))_{m',n'}=:\alpha((q',k',\ell'),i',j').$$
By definition of $\mu$, there exists a unique pair $(m'',n'')<
\mathbf{s}_{(q',k',l'),i',j'}$ such that
$$(u_{(q',k',\ell'),i',j'}(m'',n''))_{i,j}=
\alpha(v_{(q',k',\ell'),i',j'}(m'',n''),i,j)= w_{m,n}.$$ By
induction hypothesis, we can write
$$w_{n',m'}=(\mu^{d}(\alpha))_{n',m'}=\alpha((q'',\ell',k'),j',i'),$$
where $q''\in Q_\mathcal{A}$ and by (\ref{eqshape}) we have
$$(|\mu(w_{n',m'})|_1,|\mu(w_{n',m'})|_2)=(|\mu(w_{m',n'})|_2,|\mu(w_{m',n'})|_1).$$
Therefore $w_{n,m}$ must be generated by applying $\mu$ to
$w_{n',m'}$. Moreover
\begin{gather*}
(|u_{(q'',\ell',k'),j',i'}(n'',m'')|_1,|u_{(q'',\ell',k'),j',i'}(n'',m'')|_2)\hspace{2cm}\\ =(|u_{(q',k',\ell'),i',j'}(m'',n'')|_2,|u_{(q',k',\ell'),i',j'}(m'',n'')|_1).
\end{gather*}
Thus, we conclude that
$$(u_{(q'',\ell',k'),j',i'}(n'',m''))_{j,i}=\alpha(v_{(q'',\ell',k'),j',i'}(n'',m''), j,i)=w_{n,m}.$$
Therefore we get that $v_{(q'',\ell',k'),j',i'}(n'',m'')=(q''',\ell,k)$
for some $q''' \in Q_\mathcal{A}$. Hence,
$$w_{n,m}=(\mu^{d+1}(\alpha))_{n,m}=\alpha((q''',\ell,k),j,i)$$
and the result follows.

\end{document}